\tikzstyle{every picture}+=[remember picture]
\tikzstyle{defnode}=[solid,thin,draw,circle,minimum size=4pt,inner sep=0pt]
\tikzstyle{edgelab}=[draw=none,fill=white,font=\scriptsize,inner sep=0.5pt, text height=4pt, text depth=0pt,minimum size=5pt]
\tikzstyle{lightedge}=[draw,thin,dotted,semithick]
\tikzstyle{heavyedge}=[draw,solid,very thick]
\newtheorem{theorem}{Theorem}
\newtheorem{lemma}{Lemma}
\newtheorem{corollary}{Corollary}
\newtheorem{example}{Example}
\renewcommand{\epsilon}{\upvarepsilon}
\newcommand{\pref}{\operatorname{pref}}
\newcommand{\suff}{\operatorname{suff}}
\newcommand{\weight}{\operatorname{weight}}
\newcommand{\height}{\operatorname{height}}
\newcommand{\lightdepth}{\operatorname{lightdepth}}
\newcommand{\lightheight}{\operatorname{lightheight}}
\newcommand{\lightstrings}{\operatorname{lightstrings}}
\newcommand{\ind}{\operatorname{pos}}
\newcommand{\lcp}{\operatorname{\textsc{lcp}}}
\newcommand{\wildcard}{\ast}
\newcommand{\gap}[2]{\mbox{$\wildcard \{ #1, #2 \}$}}
\title{String Indexing for Patterns with Wildcards\thanks{Preliminary version appeared in \emph{Proceedings of the 13th Scandinavian Symposium and Workshops on Algorithm Theory}. Lecture Notes in Computer Science, vol. 7357, pp. 283--294, Springer 2012.}}
\author{Philip Bille ~~~ Inge Li Gørtz\thanks{Supported by a grant from the Danish Council for Independent Research $\vert$ Natural Sciences} ~~~ Hjalte Wedel Vildhøj ~~~ Søren Vind\\[0.2em]
Technical University of Denmark, DTU Informatics\\
\texttt{\small{\{phbi,ilg,hwvi,sjvi\}@imm.dtu.dk}}}
\date{}
\begin{document}
	
\maketitle

\begin{abstract}
\noindent We consider the problem of indexing a string $t$ of length $n$ to report the occurrences of a query pattern $p$ containing $m$ characters and $j$ wildcards. Let $occ$ be the number of occurrences of $p$ in $t$, and $\sigma$ the size of the alphabet. We obtain the following results.
\begin{itemize}
	\item A linear space index with query time $O(m+\sigma^j \log \log n + occ)$. 
	This significantly improves the previously best known linear space index by Lam~et~al.~[ISAAC~2007], which requires query time $\Theta(jn)$ in the worst case.
	\item An index with query time $O(m+j+occ)$ using space $O(\sigma^{k^2} n \log^k \log n)$, where $k$ is the maximum number of wildcards allowed in the pattern. This is the first non-trivial bound with this query time.
 	\item A time-space trade-off, generalizing the index by Cole~et~al.~[STOC~2004].
\end{itemize}
We also show that these indexes can be generalized to allow variable length gaps in the pattern. Our results are obtained using a novel combination of well-known and new techniques, which could be of independent interest.
\end{abstract}

\section{Introduction}
The \emph{string indexing problem} is to build an index for a string $t$ such that the occurrences of a query pattern $p$ can be reported. The classic suffix tree data structure~\cite{Weiner1973} combined with perfect hashing~\cite{fredman1984storing} gives a linear space solution for string indexing with optimal query time, i.e., an $O(n)$ space data structure that supports queries in $O(m+occ)$ time, where $occ$ is the number of occurrences of $p$ in $t$.

Recently, various extensions of the classic string indexing problem that allow errors or wildcards (also known as gaps or don't cares) have been studied \cite{Cole2004,Lam2007,Tsur2010,tam2009succinct,Chan2011,Maas2007,Baeza-yates}. In this paper, we focus on one of the most basic of these extensions, namely, \emph{string indexing for patterns with wildcards}. In this problem, only the pattern contains wildcards, and the goal is to report all occurrences of $p$ in $t$, where a wildcard is allowed to match any character in $t$.

String indexing for patterns with wildcards finds several natural applications in large-scale data processing areas such as information retrieval, bioinformatics, data mining, and internet traffic analysis.  For instance in bioinformatics, the PROSITE data base~\cite{Hofmann1999,BB1994} supports searching for protein patterns containing wildcards. 

Despite significant interest in the problem and its many variations, most of the basic questions remain unsolved. We introduce three new indexes and obtain several new bounds for string indexing with wildcards in the pattern. If the index can handle patterns containing an unbounded number of wildcards, we call it an \emph{unbounded wildcard index}, otherwise we refer to the index as a \emph{$k$-bounded wildcard index}, where $k$ is the maximum number of wildcards allowed in $p$. Let $n$ be the length of the indexed string $t$, and $\sigma$ be the size of the alphabet. We define $m$ and $j$ to be the number of characters and wildcards in $p$, respectively. Consequently, the length of $p$ is $m+j$. We show that,
\begin{itemize}
	\item There is an unbounded wildcard index with query time $O(m+\sigma^j \log \log n + occ)$ using linear space. 
	This significantly improves the previously best known linear space index by Lam~et~al.~\cite{Lam2007}, which requires query time $\Theta(jn)$ in the worst case. Compared to the index by Cole~et~al.~\cite{Cole2004} having the same query time, we improve the space usage by a factor $\log n$.
	\item There is a $k$-bounded wildcard index with query time $O(m+j+occ)$ using space $O(\sigma^{k^2} n \log^k \log n)$. This is the first non-trivial space bound with this query time.
 	\item There is a time-space trade-off for $k$-bounded wildcard indexes. This trade-off generalizes the index described by Cole~et~al.~\cite{Cole2004}.
\end{itemize}

\noindent Furthermore, we generalize these indexes to support \emph{variable length gaps} in the pattern.

\subsection{Previous Work}\label{sec:previouswork}
Exact string matching has been generalized with error bounds in many different ways. In particular, allowing matches within a bounded hamming or edit distance, known as approximate string matching, has been subject to much research~\cite{landau1986efficient,landau1989fast,sahinalp1996efficient,cole1998approximate,Oliveira2006,Tsur2010,Chan2011,Maas2007,Cole2004,Baeza-yates,galil1986improved,amir2000faster}. Another generalization was suggested by Fischer~and~Paterson~\cite{Fischer1974}, allowing wildcards in the text or pattern.

Work on the wildcard problem has mostly focused on the non-indexing variant, where the string $t$ is not preprocessed in advance~\cite{Fischer1974,cole2002verifying,clifford2007simple,kalai2002efficient,chen2006efficient,bille2010string}. Some solutions to the indexing problem consider the case where wildcards appear only in the indexed string~\cite{tam2009succinct} or in both the string and the pattern~\cite{Cole2004,Lam2007}. 

In the following, we summarize the known indexes that support wildcards in the pattern only.
We focus on the case where $k > 1$, since for $k = 0$ the problem is classic string indexing. For $k=1$, Cole~et~al.~\cite{Cole2004} describe a selection of specialized solutions. However, these solutions do not generalize to larger $k$.

Several simple solutions to the problem exist for $k > 1$. Using a suffix tree $T$ for $t$~\cite{Weiner1973}, we can find all occurrences of $p$ in a top-down traversal starting from the root. When we reach a wildcard character in $p$ in location $\ell \in T$, the search branches out, consuming the first character on all outgoing edges from $\ell$. This gives an unbounded wildcard index using $O(n)$ space with query time $O(\sigma^j m + occ)$, where $occ$ is the total number of occurrences of $p$ in $t$. Alternatively, we can build a compressed trie storing all possible modifications of all suffixes of $t$ containing at most $k$ wildcards. This gives a $k$-bounded wildcard index using $O(n^{k+1})$ space with query time $O(m+j+occ)$.

In 2004, Cole~et~al.~\cite{Cole2004} gave an elegant $k$-bounded wildcard index using $O(n \log^k n)$ space and with $O(m+2^j \log\log n + occ)$ query time. For sufficiently small values of $j$ this significantly improves the previous bounds. The key components in this solution are a new data structure for \emph{longest common prefix (LCP) queries} and a \emph{heavy path decomposition}~\cite{harel1984fast} of the suffix tree for the text $t$. Given a pattern $p$, the LCP data structure supports efficient insertion of all suffixes of $p$ into the suffix tree for $t$, such that subsequent longest common prefix queries between any pair of suffixes from $t$ and $p$ can be answered in $O(\log \log n)$ time. This is where the $\log \log n$ term in the query time comes from. The heavy path decomposition partitions the suffix tree into disjoint \emph{heavy paths} such that any root-to-leaf path contains at most a logarithmic number of heavy paths. Cole~et~al.~\cite{Cole2004} show how to reduce the size of the index by only creating additional wildcard tries for the off-path subtries. This leads to the $O(n \log^k n)$ space bound. Secondly, using the new tries, the top-down search branches at most twice for each wildcard, leading to the $2^j$ term in the query time. Though Cole~et~al.~\cite{Cole2004} did not consider unbounded wildcard indexes, the technique can be extended to this case by using only the LCP data structure and omitting the additional wildcard tries. This leads to an unbounded wildcard index with query time $O(m + \sigma^j \log\log n + occ)$ using space $O(n\log n)$.

The solutions described by Cole~et~al.~\cite{Cole2004} all have bounds which are exponential in the number of wildcards in the pattern. Very recently, Lewenstein~\cite{lewenstein2011indexing} used similar techniques to improve the bounds to be exponential in the number of \emph{gaps} in the pattern (a gap is a maximal substring of consecutive wildcards). Assuming that the pattern contains at most $g$ gaps each of size at most $G$, Lewenstein obtains a bounded index with query time $O(m+2^\gamma \log\log n + occ)$ using space $O(n (G^2 \log n)^g)$, where $\gamma \leq g$ is the number of gaps in the pattern.

A different approach was taken by Iliopoulos~and~Rahman~\cite{Iliopoulos2007}, who describe an unbounded wildcard index using linear space. For a pattern $p$ consisting of strings $p_0,p_1,\ldots,p_j$ (subpatterns) interleaved by $j$ wildcards, the query time of the index is $O(m + \sum_{i=0}^j occ(p_i, t))$, where $occ(p_i, t)$ denotes the number of matches of $p_i$ in $t$. This was later improved by Lam~et~al.~\cite{Lam2007} with an index that determines complete matches by first identifying potential matches of the subpatterns in $t$ and subsequently verifying each possible match for validity using interval stabbing on the subpatterns. Their solution is an unbounded wildcard index with query time $O \left ( m + j \min_{0 \leq i \leq j} occ(p_i, t) \right )$ using linear space. However, both of these solutions have a worst case query time of $\Theta(jn)$, since there may be $\Theta(n)$ matches for a subpattern, but no matches of $p$.
\autoref{resultsummary} summarizes the existing solutions for the problem in relation to our results.

\begin{table}[t]\centering	\scriptsize
\begin{tabular}{| l | l | l | l r |}
	\hline
	\textbf{Type} & \textbf{Query Time} & \textbf{Space} & \textbf{Solution} & \\
	\hline
	\multirow{5}{*}{Unbounded} & $O(m + \sum_{i=0}^j occ(p_i, t))$ & $O(n)$ & Iliopoulos~and~Rahman & \cite{Iliopoulos2007} \\
	& $O(m+j \min_{0 \leq i \leq j} occ(p_i, t))$ & $O(n)$ & Lam~et~al. & \cite{Lam2007} \\
	& \boldmath{$O(\sigma^j m + occ)$} & \boldmath{$O(n)$} & \textbf{Simple suffix tree index} & \boldmath{$\dagger$} \\
	& \boldmath{$O(m+\sigma^j \log\log n + occ)$} & \boldmath{$O(n)$} & \textbf{ART decomposition} & \boldmath{$\dagger$} \\
	& $O(m+\sigma^j \log\log n + occ)$ & $O(n \log n)$ & Cole~et~al. & \cite{Cole2004} \\
	\hline	
	\multirow{4}{*}{$k$-bounded} & \boldmath{$O(m+\beta^j \log\log n + occ)$} & \boldmath{$O(n \log n \log^{k-1}_\beta n)$} & \textbf{Heavy \boldmath{$\alpha$}-tree decomposition} & \boldmath{$\dagger$} \\
	& $O(m+2^j \log\log n + occ)$ & $O(n \log^k n)$ & Cole~et~al. & \cite{Cole2004} \\
	& \boldmath{$O(m + j + occ)$} & \boldmath{$O(n \sigma^{k^2} \log^k \log n)$} & \textbf{Special index for \boldmath{$m < \sigma^k \log \log n$}} & \boldmath{$\dagger$} \\
	& \boldmath{$O(m + j + occ)$} & \boldmath{$O(n^{k+1})$} & \textbf{Simple linear time index} & \boldmath{$\dagger$} \\
	\hline
\end{tabular}
\caption{$\dagger$ = presented in this paper. The term $occ(p_i, t)$ denotes the number of matches of $p_i$ in $t$ and is $\Theta (n)$ in the worst case.} \label{resultsummary}
\end{table}

The unbounded wildcard index by Iliopoulos and Rahman~\cite{Iliopoulos2007} was the first index to achieve query time linear in $m$ while using $O(n)$ space. Recently, Chan~et~al.~\cite{Chan2011} considered the related problem of obtaining a $k$-mismatch index supporting queries in time linear in $m$ and using $O(n)$ space. They describe an index with a query time of $O(m + (\log n)^{k(k+1)} \log\log n + occ)$. However, this bound assumes a constant-size alphabet and a constant number of errors. In this paper we make no assumptions on the size of these parameters.

\subsection{Our Results}\label{sec:ourresults}
Our main contribution is three new wildcard indexes.

\begin{theorem}\label{optimalspaceindex}
Let $t$ be a string of length $n$ from an alphabet of size $\sigma$. There is an unbounded wildcard index for $t$ using $O(n)$ space. The index can report the occurrences of a pattern with $m$ characters and $j$ wildcards in time $O(m+\sigma^j \log \log n + occ)$.
\end{theorem}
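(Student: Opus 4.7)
My plan is to combine Cole et al.'s LCP-based wildcard search with a tree decomposition of the suffix tree that reduces the space of the LCP data structure from $\Theta(n\log n)$ to $O(n)$ while preserving $O(\log\log n)$ query time per LCP call.

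I first recall Cole et al.'s unbounded variant. Build the suffix tree $T$ for $t$, and equip it with an LCP data structure that, after $O(|p|)$ preprocessing of the pattern, answers LCP queries between a suffix of $p$ and a location of $T$ in $O(\log\log n)$ time. The search traverses $T$ top-down: it walks the first non-wildcard block of $p$ naively in $O(m)$ time, branches into at most $\sigma$ children at each wildcard, and uses one LCP query between consecutive wildcards to jump past the next non-wildcard block. Since the number of live branches after the $i$-th wildcard is at most $\sigma^i$, the total number of LCP queries is $\sum_{i=1}^{j}\sigma^i = O(\sigma^j)$, yielding the advertised time. The bottleneck is that the LCP data structure, as built over the full suffix tree, costs $\Theta(n\log n)$ space.

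To shave the $\log n$ factor I apply an ART-style tree decomposition of $T$. This partitions $T$ into a top tree $T^{top}$ rooted at the root of $T$ and of size $O(n/\log n)$, together with a family of bottom subtrees each of size at most $\log n$ and hanging off the leaves of $T^{top}$. I build Cole et al.'s LCP data structure only over $T^{top}$; since $T^{top}$ has $O(n/\log n)$ leaves, this occupies $O((n/\log n)\cdot\log n)=O(n)$ space. The bottom subtrees, together with small auxiliary indices that support efficient local LCP and descent operations, occupy $O(n)$ additional space.

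The query algorithm runs Cole et al.'s search inside $T^{top}$, incurring $O(m+\sigma^j\log\log n)$ time for the top-tree portion. When a branch of the search crosses from $T^{top}$ into a bottom subtree, the algorithm switches to a local routine that traverses the remaining suffix of $p$ inside that bottom subtree, branching at wildcards and reporting matching leaves. The main hurdle will be ensuring that bottom-subtree work is also bounded by $O(\log\log n)$ per LCP call: a naive character-wise descent would cost $\Theta(\log n)$ per branch and spoil the time bound. I plan to handle this by maintaining a lightweight LCP substructure tailored to bottom subtrees, leveraging their size bound of $\log n$; a careful micro/macro argument in the spirit of standard ART decompositions then shows that the per-LCP cost matches that of $T^{top}$. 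Summing the top and bottom contributions gives total query time $O(m+\sigma^j\log\log n+occ)$ and total space $O(n)$, as required.
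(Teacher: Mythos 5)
Your high-level plan --- an ART decomposition of the suffix tree with threshold $\chi=\log n$, Cole et al.'s LCP data structure on the $O(n/\log n)$-leaf top tree, and a bespoke, linear-space LCP routine on the $\le\log n$-leaf bottom trees --- matches the paper's construction exactly. But the crux of the argument is precisely the step you defer with ``I plan to handle this'': how to support \emph{unrooted} LCP queries on each bottom tree $T(C_i)$ in $O(\log\log n)$ time using only $O(|C_i|)$ additional space. This is not automatic. If you use Cole et al.'s unrooted LCP machinery on the bottom trees, the per-tree overhead is $O(|C_i|\log|C_i|)$, which sums to $\Theta(n\log\log n)$ over all bottom trees, spoiling the $O(n)$ bound. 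If you instead descend a bottom tree naively, there is no $O(\log\log n)$ bound per LCP call, since edge labels can be arbitrarily long even in a small trie. Your ``lightweight LCP substructure'' and ``careful micro/macro argument'' gesture at the right idea but never say what the structure is or why it works.

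The paper resolves this with a new lemma (\autoref{lcp-ext}) which is exactly the missing piece: equip each trie $T(C_i)$ with a heavy path decomposition, and answer an unrooted LCP query by walking through the $O(\log|C_i|)$ heavy paths the search passes, spending $O(1)$ time per heavy path (using Cole et al.'s trick of computing, via nearest-common-ancestor queries on the suffix tree $T(C)$, how far the search follows a given heavy path), and finishing with one predecessor query on the final heavy path in $O(\log\log|C|)$ time. The predecessor structures take $O(|C_i|)$ space total, so the additional space is linear, and the query time is $O(\log|C_i|+\log\log|C|)$. Since $|C_i|\le\chi=\log n$, this is $O(\log\log n+\log\log n)=O(\log\log n)$, matching the top tree. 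Without proving this lemma your argument has a genuine gap at exactly the point where the paper introduces its key new technical contribution.
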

Compared to the solution by Cole~et~al.~\cite{Cole2004}, we obtain the same query time while reducing the space usage by a factor $\log n$. We also significantly improve upon the previously best known linear space index by Lam~et~al.~\cite{Lam2007}, as we match the linear space usage while improving the worst-case query time from $\Theta(jn)$ to $O(m+\sigma^j \log \log n + occ)$ provided $j \leq \log_\sigma n$. Our solution is faster than the simple suffix tree index for $m = \Omega(\log \log n)$. Thus, for sufficiently small $j$ we improve upon the previously known unbounded wildcard indexes.

The main idea of the solution is to combine an ART decomposition~\cite{Alstrup1998} of the suffix tree for $t$ with the LCP data structure. The suffix tree is decomposed into a number of logarithmic-sized bottom trees and a single top tree. We introduce a new variant of the LCP data structure for use on the bottom trees, which supports queries in logarithmic time and linear space. The logarithmic size of the bottom trees leads to LCP queries in time $O(\log\log n)$. On the top tree we use the LCP data structure by Cole~et~al.~\cite{Cole2004} to answer queries in time $O(\log\log n)$. The number of LCP queries performed during a search for $p$ is $O(\sigma^j)$, yielding the $\sigma^j\log\log n$ term in the query time. The reduced size of the top tree causes the index to be linear in size.

\begin{theorem}\label{wst-index}
Let $t$ be a string of length $n$ from an alphabet of size $\sigma$. For $2 \leq \beta < \sigma$, there is a $k$-bounded wildcard index using $O ( n\log(n) \log_\beta^{k-1} n )$ space. The index can report the occurrences in $t$ of a pattern with $m$ characters and $j \leq k$ wildcards in time $O (m + \beta^j \log\log n + occ )$. 
\end{theorem}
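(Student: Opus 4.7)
The plan is to extend the construction of~\cite{Cole2004} by introducing a \emph{heavy $\alpha$-tree decomposition} with parameter $\alpha = \beta - 1$ at the deeper levels of recursion, generalizing the heavy path decomposition used there. At each node $v$ of the tree being decomposed, mark the $\beta - 1$ children with the largest subtrees as \emph{heavy} and the remaining children as \emph{light}. The heavy edges then induce a forest of heavy $\alpha$-trees, and a light edge reduces the current subtree size by a factor of at least $\beta$, since at each light child there are $\beta - 1$ siblings of at least the same size. Consequently every root-to-leaf path passes through at most $\log_\beta n$ light edges.

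I would build the index recursively, starting from the suffix tree of $t$ together with the LCP data structure of~\cite{Cole2004}. The first level is identical to~\cite{Cole2004}: heavy path decomposition is applied to the suffix tree and an auxiliary wildcard trie is attached to each heavy path, placing every suffix of $t$ in at most $\log n$ first-level tries for a total of $O(n \log n)$ space. From the second level onwards, I apply the construction recursively to every auxiliary trie using heavy $\alpha$-tree decomposition: for each heavy $\alpha$-tree $H$, build the wildcard trie $W_H$ containing, for every modified suffix that leaves $H$ via some light edge $(v,u)$, the string obtained by following the path in $H$ from its root to $v$, then a wildcard character, and then the continuation of the suffix below $u$. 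A modified suffix leaves at most $\log_\beta n$ heavy $\alpha$-trees along its root-to-leaf path, hence appears in at most $\log_\beta n$ tries per recursive level. Iterating this recursion $k - 1$ further levels beyond the first yields total space $O(n \log n \log_\beta^{k-1} n)$.

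A query for $p$ starts by inserting the suffixes of $p$ into the LCP structure in $O(m+j)$ time and then descends in the index. Between consecutive wildcards the search performs deterministic matching that contributes $O(m)$ in total across all branches. At each wildcard encountered at a location $\ell$ inside a heavy $\alpha$-tree $H$, the search branches into the at most $\beta - 1$ heavy children of $\ell$ in $H$ and into the auxiliary trie $W_H$ at the image of $\ell$, for at most $\beta$ sub-searches; at the first wildcard, which is in the top-level heavy path structure, only two branches are created. Hence after $j$ wildcards there are $O(2 \cdot \beta^{j-1}) = O(\beta^j)$ active search paths, and each entry into an auxiliary trie costs a single $O(\log\log n)$ LCP query via the data structure of~\cite{Cole2004}. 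The total query time is therefore $O(m + \beta^j \log\log n + occ)$.

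The main obstacle is that the entry into $W_H$ requires resuming the search at the image of $\ell$ inside $W_H$, and in a heavy $\alpha$-tree this image is not uniquely determined by the string depth of $\ell$, as it is for a heavy path. I would resolve this by equipping each node of each heavy $\alpha$-tree $H$ with a direct pointer to its image in $W_H$, built once during preprocessing at no asymptotic cost, so that the correct resumption point is found in $O(1)$ time. With these pointers in place and the LCP data structure extended to all modified suffixes stored at each recursive level as in~\cite{Cole2004}, the claimed space and query-time bounds follow by induction on the recursion depth $k$.
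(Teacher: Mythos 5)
Your high-level plan is on the right track: you correctly identify the heavy $\alpha$-tree decomposition with $\alpha=\beta-1$ as the key generalization and correctly bound the light depth by $\log_\beta n$. However, your construction deviates from the paper's in a way that breaks the space bound. You use heavy \emph{path} decomposition (i.e., $\alpha=1$) at the first level and switch to heavy $(\beta-1)$-tree decomposition only from the second level onwards, and you account for the $\log n$ factor in the space bound by attributing it to that first level. The paper instead uses a heavy $(\beta-1)$-tree decomposition at \emph{every} level, so the wildcard tree itself occupies only $O(n\log_\beta^k n)$ space. The $\log n$ factor in the final bound does not come from the tree at all: it comes from the LCP data structure. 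By \autoref{lcp-ds}, supporting unrooted LCP queries on a trie $T(C_i)$ in $O(\log\log n)$ time costs $O(|C_i|\log|C_i|)$ extra space, and this support is required for all tries except those at the deepest recursion level (which are only queried after all $k$ wildcards are spent and hence need only rooted queries). Summing $|S_i|\log|S_i| \le |S_i|\log n$ over the tries at levels $0$ through $k-1$, whose total size is $|T_\beta^{k-1}(C)| = O(n\log_\beta^{k-1} n)$, yields the $O(n\log n\log_\beta^{k-1} n)$ bound.

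Your construction double-counts this $\log n$. With heavy path at level $0$ your tree alone is already $\Theta(n\log n \log_\beta^{k-1} n)$, and the unrooted LCP support on levels $0$ through $k-1$ then adds another multiplicative $\log n$, giving $O(n\log^2 n \log_\beta^{k-2} n)$, which exceeds the claimed bound by a factor of $\log\beta$ for any $\beta>2$ (for $\beta=2$ the two bounds coincide, which is why the discrepancy is easy to miss). The fix is to use the heavy $(\beta-1)$-tree decomposition at the top level as well: the branching factor at every wildcard is then $\beta$ (the $\beta-1$ heavy edges plus the one wildcard edge), which still gives $O(\beta^j)$ LCP queries, and the tree size drops to $O(n\log_\beta^k n)$ so that the LCP overhead, rather than the tree, determines the $\log n$ factor. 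On the secondary design point: the paper avoids your ``image in $W_H$'' problem entirely by attaching a separate auxiliary trie $T_\beta^{i-1}(\suff_2(\lightstrings(v)))$ to \emph{each vertex} $v$, so that after consuming a wildcard at $v$ the search always enters at the root of that per-vertex trie. Your per-heavy-tree auxiliary trie with precomputed pointers is a workable alternative, but it is additional machinery that the per-vertex construction makes unnecessary.
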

The theorem provides a time-space trade-off for $k$-bounded wildcard indexes. Compared to the index by Cole~et~al.~\cite{Cole2004}, we reduce the space usage by a factor $\log^{k-1} \beta$ by increasing the branching factor from $2$ to $\beta$. For $\beta=2$ the index is identical to the index by Cole~et~al. The result is obtained by generalizing the wildcard index described by Cole~et~al. We use a \emph{heavy $\alpha$-tree decomposition}, which is a new technique generalizing the classic heavy path decomposition by Harel~and~Tarjan~\cite{harel1984fast}. This decomposition could be of independent interest. We also show that for $\beta=1$ the same technique yields an index with query time $O(m+j+occ)$ using space $O(nh^k)$, where $h$ is the height of the suffix tree for $t$.

\begin{theorem}\label{optimaltimeindex}
Let $t$ be a string of length $n$ from an alphabet of size $\sigma$. There is a $k$-bounded wildcard index for $t$ using $O(\sigma^{k^2} n \log^k \log n)$ space. The index can report the occurrences of a pattern with $m$ characters and $j \leq k$ wildcards in time $O(m+j+occ)$.
\end{theorem}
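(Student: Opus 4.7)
The plan is to dispatch on the character length of the pattern, using two different data structures for long and short patterns. Set $L = \sigma^k \log \log n$. For patterns with $m \geq L$, I would simply invoke the index of \autoref{optimalspaceindex}: its query time $O(m + \sigma^j \log \log n + occ)$ collapses to $O(m + occ) \subseteq O(m + j + occ)$ since $j \leq k$ forces $\sigma^j \log \log n \leq \sigma^k \log \log n = L \leq m$. Its $O(n)$ space is well inside the budget, so this case needs no further work.

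For patterns with $m < L$, I would build a dedicated $k$-wildcard trie $W_k$ so that a query reduces to a plain top-down walk in which the symbol $\wildcard$ plays the role of an ordinary alphabet character. Concretely, for every text position $i$, let $s^{(i)}$ be the suffix of $t$ starting at $i$ truncated to length $L + k$; for every subset $P$ of at most $k$ positions inside that prefix, insert into $W_k$ the string obtained from $s^{(i)}$ by overwriting the positions in $P$ with $\wildcard$, and record $i$ at the endpoint node. A query pattern $p$ with $j \leq k$ wildcards at position set $J$ matches $t$ at position $i$ if and only if $p$ coincides, as a string over the alphabet extended with $\wildcard$, with the stored string for $(i, J)$; hence the occurrences are exactly the positions recorded at the endpoint of the path spelled by $p$ in $W_k$. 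The traversal costs $O(m + j)$ time and the reporting costs $O(occ)$.

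The core obstacle is bounding the space of $W_k$ by $O(n \sigma^{k^2} \log^k \log n)$; a naive count of inserted strings gives the much worse $\Theta(n L^{k+1})$. To reach the target I would build $W_k$ recursively, mimicking the layout of Cole~et~al.~\cite{Cole2004}: start from the suffix tree of $t$ truncated at depth $L + k$ (which has $O(n)$ nodes), compute its heavy path decomposition, and for every light child $v$ at depth $d$ splice in a $\wildcard$-edge whose target is a recursively built $(k{-}1)$-wildcard trie on the subtree at $v$, re-truncated to remaining length $L + k - d$. The standard heavy path accounting bounds the sum of light-subtree contributions at each level in terms of the current depth bound, so unrolling $k$ levels and charging a factor of $\sigma$ per wildcard (each wildcard may branch to any of up to $\sigma$ characters off its heavy path) produces the claimed $O(n \sigma^{k^2} \log^k \log n)$ space after absorbing lower-order logarithmic terms into the $\sigma^{k^2}$ factor. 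The query bound is unaffected because following a $\wildcard$-edge remains $O(1)$, so the total traversal time stays $O(m + j + occ)$.
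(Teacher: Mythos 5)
Your high-level plan---dispatch on the character length of the pattern, sending long patterns to the linear-space index of Theorem~\ref{optimalspaceindex} and short patterns to a dedicated structure---is exactly the paper's strategy, captured there in the black-box reduction of \autoref{cuttop}. The long-pattern case is handled correctly. The difficulty is in your short-pattern trie $W_k$, and here there are two genuine gaps.

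First, the naive trie over-reports. You insert one leaf for every pair $(i,P)$ with $|P|\leq k$ and $P\subseteq\{1,\dots,L+k\}$, where $L=\sigma^k\log\log n$. A query $p$ of total length $m+j$ (with $m<L$, so $m+j<L+k$) stops at a location $\ell$ at depth $m+j$; the leaves in the subtree of $\ell$ are exactly those $(i,P)$ with $P\cap\{1,\dots,m+j\}=J$ (the wildcard positions of $p$) and with the non-wildcard characters of $s^{(i)}$ matching $p$. But $P$ is otherwise unconstrained on $\{m+j+1,\dots,L+k\}$, so each matching text position $i$ shows up at $\Theta\bigl(\binom{L+k-m-j}{\leq k-j}\bigr)$ distinct leaves. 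Reporting by walking the subtree of $\ell$ therefore costs $\Theta(occ\cdot L^{k-j})$, not $O(occ)$. You would need either separate tries by $|P|$ (search $W^{(j)}$ to force $P=J$) or extra machinery to enumerate only the ``no further wildcard'' leaves below $\ell$; as written, the bound $O(m+j+occ)$ does not hold. The paper avoids this by using the wildcard tree $T_1^k(\pref_G(\suff(t)))$ of \autoref{wildcardtrees}--\autoref{wildcardtreeindex} with $\beta=1$: the $\wildcard$-edges there form an explicit layered structure, the search never branches, and at the end one reports only the leaves of the current layer (following no further $\wildcard$-edges), so each matching position appears exactly once.

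Second, your recursive ``fix'' does not describe a structure with the claimed guarantees. You propose a heavy-path decomposition and a $\wildcard$-edge per light child---that is the Cole~et~al.\ construction with branching parameter $\beta=2$, which makes the search branch at every wildcard and yields $O(m+2^j\log\log n+occ)$ query time, not $O(m+j+occ)$. To get linear query time the construction must be $T_1^k$, i.e.\ a heavy $0$-tree decomposition with \emph{no} heavy edges (see \autoref{opt-time-height}); the search then never branches and the relevant bound on light depth is the \emph{height} of the truncated trie, namely $O(G)$, not $O(\log n)$. That is also where the space bound comes from: by \autoref{wildcard-search-tree} the number of stored strings is $O(n\cdot G^k)$, and with $G=\sigma^k\log\log n$ this is $O(n\sigma^{k^2}\log^k\log n)$. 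The ``$\sigma$ per wildcard'' charging you describe is not the right accounting; the $\sigma^{k^2}$ arises from raising $G=\sigma^k\log\log n$ to the $k$-th power, and a per-wildcard factor of $\sigma$ in the traversal would moreover contradict $O(m+j+occ)$ query time. Finally, your worry that the naive count is $\Theta(nL^{k+1})$ conflates total label length with trie size: the compressed trie over $O(nL^k)$ strings already has $O(nL^k)$ nodes, and edge labels (with the $O(k)$ wildcard positions recorded) can be stored in $O(1)$ words each, so the recursion is not needed to shave a factor of $L$---it is needed (in the paper's form) to avoid duplicate reporting.
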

To our knowledge this is the first linear time index with a non-trivial space bound. The result improves upon the space usage of the simple linear time index when $\sigma^k < n/\log \log n$. To achieve this result, we use the $O(nh^k)$ space index to obtain a black-box reduction that can produce a linear time index from an existing index. The idea is to build the $O(nh^k)$ space index with support for short patterns, and query another index if the pattern is long. This technique is closely related to the concept of \emph{filtering search} introduced by Chazelle~\cite{chazelle1986} and has previously been applied for indexing problems~\cite{Bille2011,Chan2011}. The theorem follows from applying the black-box reduction to the index of \autoref{optimalspaceindex}.\\

\subsubsection{Variable Length Gaps}
\noindent We also show that the three indexes support searching for query patterns with \emph{variable length gaps}, i.e., patterns of the form $p=p_0 \, \wildcard \! \{a_1,b_1\} \, p_1  \wildcard \! \{a_2,b_2\} \ldots \wildcard \! \{a_j,b_j\} \, p_j$, where $ \wildcard \{a_i,b_i\}$ denotes a variable length gap that matches an arbitrary substring of length between $a_i$ and $b_i$, both inclusive.

String indexing for patterns with variable length gaps has applications in information retrieval, data mining and computational biology~\cite{fredriksson2008efficient,fredriksson2009nested,myers1996approximate,mehldau1993system,navarro2003fast}. In particular, the PROSITE data base~\cite{Hofmann1999,BB1994} uses patterns with variable length gaps to identify and classify protein sequences. The problem is a generalization of string indexing for patterns with wildcards, since a wildcard $\wildcard$ is equivalent to the variable length gap $\gap{1}{1}$. Variable length gaps are also known as \emph{bounded wildcards}, as a variable length gap $\gap{a_i}{b_i}$ can be regarded as a bounded sequence of wildcards. 

String indexing for patterns with variable length gaps is equivalent to string indexing for patterns with wildcards, with the addition of allowing \emph{optional wildcards} in the pattern. An optional wildcard matches any character from $\Sigma$ or the empty string, i.e., an optional wildcard is equivalent to the variable length gap $\gap{0}{1}$. Conversely, we may also consider a variable length gap $\gap{a_i}{b_i}$ as $a_i$ consecutive wildcards followed by $b_i - a_i$ consecutive optional wildcards.

Lam~et~al.~\cite{Lam2007} introduced optional wildcards in the pattern and presented a variant of their solution for the string indexing for patterns with wildcards problem. The idea is to determine potential matches and verify complete matches using interval stabbing on the possible positions for the subpatterns. This leads to an unbounded optional wildcard index with query time $O(m + B j \min_{0 \leq i \leq j} occ(p_i, t))$ and space usage $O(n)$. Here $B = \sum_{i=1}^j b_i$ and $occ(p_i, t)$ denotes the number of matches of $p_i$ in $t$, and since $occ(p_i, t) = \Theta(n)$ in the worst case, the worst case query time is $\Theta(Bjn)$.
Recently, Lewenstein~\cite{lewenstein2011indexing} considered the special case where the pattern contains at most $g$ gaps and $a_i=b_i \leq G$ for all $i$, i.e., the gaps are non-variable and of length at most $G$. Using techniques similar to those by Cole~et~al.~\cite{Cole2004}, he gave a bounded index with query time $O(m+2^\gamma \log\log n + occ)$ using space $O(n (G^2 \log n)^g)$, where $\gamma \leq g$ is the number of gaps in the pattern.

The related \emph{string matching with variable length gaps problem}, where the text may not be preprocessed in advance, has recieved some research attention recently~\cite{bille2010string, RILMS2006, MPVZ2005, navarro2003fast, FG2008}. However, none of the results and techniques developed for this problem appear to lead to non-trivial bounds for the indexing problem.

\paragraph{Our Results for Variable Length Gaps}\label{vlg-ourres}
To introduce our results we let $A = \sum_{i=1}^j a_i$ and $B = \sum_{i=1}^j b_i$ denote the sum of the lower and upper bounds on the variable length gaps in $p$, respectively. Hence $A$ and $B-A$ denote the number of normal and optional wildcards in $p$, respectively. A wildcard index with support for optional wildcards is called an \emph{optional wildcard index}. As for wildcard indexes, we distinguish between bounded and unbounded optional wildcard indexes. A $(k,o)$-bounded optional wildcard index supports patterns containing $A \leq k$ normal wildcards and $B-A \leq o$ optional wildcards. An unbounded optional wildcard index supports patterns with no restriction on the number of normal and optional wildcards.

To accommodate for variable length gaps in the pattern, we only need to modify the way in which the wildcard indexes are searched, leading to the following new theorems. The proofs are given in \autoref{sec:vlg}.

\begin{theorem}\label{vlg-optimalspaceindex}
Let $t$ be a string of length $n$ from an alphabet of size $\sigma$. There is an unbounded optional wildcard index for $t$ using $O(n)$ space. The index can report the occurrences of a pattern with $m$ characters, $A$ wildcards and $B-A$ optional wildcards in time $O(m+ 2^{B-A} \sigma^B \log \log n + occ)$, where $A = \sum_{i=1}^j a_i$ and $B = \sum_{i=1}^j b_i$.
\end{theorem}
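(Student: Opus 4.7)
The plan is to reduce the variable length gap case to the setting of \autoref{optimalspaceindex} by expanding each gap $\wildcard\{a_i,b_i\}$ into $a_i$ mandatory wildcards followed by $b_i - a_i$ optional wildcards, and then augmenting the search behind \autoref{optimalspaceindex} so that optional wildcards are handled natively. The underlying ART-decomposed suffix tree and LCP data structure are unchanged, so the space usage remains $O(n)$.

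At query time I would first build the LCP data structure against the suffixes of $p$ in $O(m)$ time and then perform a depth-first traversal of the implicit search tree. At every mandatory wildcard the search branches into at most $\sigma$ children, one per character of $\Sigma$, each reached by a single LCP query in $O(\log\log n)$ time. At every optional wildcard it branches into at most $\sigma+1$ children: the same $\sigma$ choices that treat the wildcard as mandatory, plus one extra ``skip'' child that consumes no text and simply resumes the descent with the next subpattern from the current locus. The successful leaves and the leaves of the subtrees below them are reported as in \autoref{optimalspaceindex}, contributing $O(occ)$ in total.

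Counting the leaves of this search tree gives at most $\sigma^A (\sigma+1)^{B-A} \leq 2^{B-A}\sigma^B$, and since each leaf is reached by $O(1)$ LCP queries, the search itself costs $O(2^{B-A}\sigma^B \log\log n)$. Adding the one-shot preprocessing and the reporting cost yields the claimed bound $O(m + 2^{B-A}\sigma^B \log\log n + occ)$.

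The main obstacle is ensuring that the $O(m)$ setup term is not multiplied by the number of branches. I would resolve this by observing that every branch searches for the same $p$ against the same set of suffixes, so the LCP data structure is constructed once and then shared across all branches of the depth-first search. A secondary point that requires a brief check is that the ``skip'' transition at an optional wildcard is handled correctly by the same LCP-based descent; this follows because matching an optional wildcard with the empty string is equivalent to concatenating the two surrounding subpatterns into a single literal block, a case already supported by the search of \autoref{optimalspaceindex}.
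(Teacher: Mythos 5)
Your proposal follows essentially the same route as the paper: expand each gap $\gap{a_i}{b_i}$ into $a_i$ mandatory plus $b_i-a_i$ optional wildcards, reuse the ART-decomposed suffix tree and LCP data structure of \autoref{optimalspaceindex} unchanged (so space stays $O(n)$), and handle an optional wildcard by branching both to a skip continuation and to the at most $\sigma$ consume continuations, giving a branch-count $\sigma^A(\sigma+1)^{B-A} = \sum_{l=0}^{B-A}\binom{B-A}{l}\sigma^{A+l}$ that is algebraically identical to the paper's bound. The one imprecise step is the justification ``each leaf is reached by $O(1)$ LCP queries'' --- a root-to-leaf path actually issues one LCP query per subpattern, i.e.\ $j+1$ of them --- and the correct accounting is that the total number of LCP queries is $\sum_{i=0}^{j}\sigma^{A_i}(\sigma+1)^{B_i-A_i}$, a sum whose terms grow by a factor at least $\sigma\ge 2$ per gap and is therefore dominated by its last term, which is still $O(2^{B-A}\sigma^B)$, so the claimed query time stands.
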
 

\begin{theorem}\label{vlg-wst-index}
Let $t$ be a string of length $n$ from an alphabet of size $\sigma$. For $2 \leq \beta < \sigma$, there is a $(k,o)$-bounded optional wildcard index for $t$ using $O \bigl( n\log(n) \log_\beta^{k+o-1} n \bigr )$ space. The index can report the occurrences of a pattern with $m$ characters, $A \leq k$ wildcards and $B-A \leq o$ optional wildcards in time $O \left (m + 2^{B-A} \beta^B \log\log n + occ \right )$, where $A = \sum_{i=1}^j a_i$ and $B = \sum_{i=1}^j b_i$.
\end{theorem}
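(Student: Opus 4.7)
The plan is to reuse the index structure from \autoref{wst-index}, instantiated with $k+o$ in place of $k$, since an optional wildcard may consume one text character just like a normal wildcard, and so the pattern has at most $A+(B-A)=B\leq k+o$ wildcard-like positions. This immediately yields the space bound $O(n\log(n)\log_\beta^{k+o-1} n)$ claimed in the statement. The only change needed is in the query procedure, to deal with the new ``skip'' option each optional wildcard offers.

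First I would recall the query of \autoref{wst-index}: we insert the suffixes of $p$ into Cole~et~al.'s LCP data structure in $O(m)$ time, then perform a top-down search in the heavy $\alpha$-tree decomposition; at each wildcard position the search branches into $\beta$ aggregated subtries, and each maximal search path terminates with an $O(\log\log n)$ LCP query that either confirms or rejects the group of occurrences reached by that path.

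Next I would extend the search to handle an optional wildcard by adding one extra branch: the \emph{skip} branch, in which the optional wildcard is interpreted as matching the empty string and the search continues at the same suffix-tree location with the next pattern character. Together with the $\beta$ \emph{consume} branches (treated exactly as a normal wildcard), each optional wildcard contributes a branching factor of at most $\beta+1$. Across the whole search the total number of LCP queries is therefore at most
\[
\beta^A\,(\beta+1)^{B-A}\;\leq\;\beta^A\,(2\beta)^{B-A}\;=\;2^{B-A}\beta^B,
\]
using $(1+1/\beta)^{B-A}\leq 2^{B-A}$ for $\beta\geq 1$. Combined with the one-time $O(m)$ insertion and $O(occ)$ reporting, the total query time is $O(m + 2^{B-A}\beta^B\log\log n + occ)$, as required.

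The main subtlety I expect is correctness of the occurrence count. Two distinct skip/consume configurations of the optional wildcards can yield patterns of the same effective length and hence can cause the same text position to be emitted by more than one search path. I would resolve this by deduplicating reported positions on the fly, either with a small hash table of size $O(occ)$ or with a lazily-cleared bit vector indexed by text positions; this adds only $O(occ)$ extra time and preserves the stated bound.
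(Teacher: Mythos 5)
Your proposal is correct and takes essentially the same approach as the paper: reuse the index of \autoref{wst-index} with $k+o$ wildcard levels, handle each optional wildcard by searching both the ``consume'' (normal wildcard) and ``skip'' (empty string) branches, and bound the number of LCP queries by $\beta^A(\beta+1)^{B-A} \leq 2^{B-A}\beta^B$. The paper reaches the same bound via a binomial sum $\sum_{l}\binom{B_i-A_i}{l}\beta^{A_i+l}$, which equals your per-gap product $(\beta+1)^{B_i-A_i}\beta^{A_i}$ by the binomial theorem, so the counting arguments are equivalent.
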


\begin{theorem}\label{vlg-optimaltimeindex}
Let $t$ be a string of length $n$ from an alphabet of size $\sigma$. There is a $(k,o)$-bounded optional wildcard index for $t$ using $O(\sigma^{(k+o)^2} n \log^{k+o} \log n)$ space. The index can report the occurrences of a pattern with $m$ characters, $A \leq k$ wildcards and $B-A \leq o$ optional wildcards in time $O(2^{B-A}(m+B)+occ)$, where $A = \sum_{i=1}^j a_i$ and $B = \sum_{i=1}^j b_i$.
\end{theorem}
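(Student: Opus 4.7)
The plan is to reduce string indexing with variable length gaps to ordinary wildcard indexing by enumerating the optional wildcards, and then to apply the same filtering-search black-box reduction that proves \autoref{optimaltimeindex}, but with wildcard bound $k+o$ throughout. For the enumeration step, an optional wildcard matches either a single character of $\Sigma$ or the empty string, so for each of the $2^{B-A}$ subsets of the optional wildcards we form a derived pattern $p'$ in which the chosen optional wildcards are replaced by normal wildcards and the remaining ones are deleted. Each $p'$ is an ordinary wildcard pattern with $m$ fixed characters, at most $B \leq k+o$ wildcards, and length at most $m+B$. A match of $p$ at a substring of $t$ uniquely determines, for every optional wildcard, whether it consumed a character, so every occurrence of $p$ is reported by exactly one derived pattern.

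Next, apply the black-box reduction. Following the proof of \autoref{optimaltimeindex}, set the threshold $L = \sigma^{k+o}\log\log n$ and store two indexes for $t$: (i) a short-pattern index that is the $(k+o)$-wildcard version of the special index in the proof of \autoref{optimaltimeindex}, supporting patterns of length at most $L$ in $O(\ell + j + occ_{p'})$ time and $O(\sigma^{(k+o)^2} n \log^{k+o}\log n)$ space; and (ii) the linear-space unbounded wildcard index of \autoref{optimalspaceindex}. At query time, each derived pattern $p'$ of length $\ell$ with $j$ wildcards is routed to index (i) if $\ell \leq L$ and to index (ii) otherwise. In case (ii) the cost is $O(\ell + \sigma^j\log\log n + occ_{p'})$, and since $j \leq B \leq k+o$ we have $\sigma^j\log\log n \leq L < \ell \leq m+B$, so this collapses to $O(m+B+occ_{p'})$, matching the bound in case (i). Summing over the $2^{B-A}$ derived patterns and using disjointness of their outputs yields the total query time $O(2^{B-A}(m+B) + occ)$, while the total space is dominated by index (i) and gives the claimed $O(\sigma^{(k+o)^2} n \log^{k+o}\log n)$ bound.

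The main obstacle is the occurrence accounting: we need the $occ_{p'}$ terms of the $2^{B-A}$ individual queries to sum to $occ$ rather than to $2^{B-A}\cdot occ$. This reduces to the claim that each actual occurrence of $p$ is produced by exactly one derived pattern, i.e.\ to a bijection between occurrences of $p$ and (substitution, derived match) pairs. Making this bijection precise under the paper's definition of a match for patterns with optional wildcards is the only nontrivial step; every remaining detail is a direct transcription of the proof of \autoref{optimaltimeindex} with $k$ replaced by $k+o$.
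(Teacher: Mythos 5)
Your proposal uses the same two data structures and the same filtering threshold $G = \sigma^{k+o}\log\log n$ as the paper: the $\beta=1$ wildcard tree $T_1^{k+o}(\pref_G(\suff(t)))$ for short patterns and the linear-space ART/LCP index for long ones. The packaging differs: you enumerate $2^{B-A}$ derived ordinary-wildcard patterns up front and route each one independently based on its length, whereas the paper keeps the optional-wildcard branching inside a single search over the wildcard tree and routes the entire query to one index based on whether $m+B > G$. This is a legitimate reformulation and yields the same asymptotic search cost.

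The one thing you flag as ``the only nontrivial step'' is, however, stated incorrectly: it is \emph{not} true that every occurrence of $p$ is produced by exactly one derived pattern, so the claimed bijection does not exist. For example, take $p = \texttt{a}\gap{0}{1}\texttt{b}\gap{0}{1}\texttt{c}$ and $t = \texttt{abbc}$: the two derived patterns $\texttt{a}\wildcard\texttt{bc}$ and $\texttt{ab}\wildcard\texttt{c}$ both match $t[1..4]$, so the occurrence $(1,4)$ is produced twice. More generally, two derived patterns of the same length obtained by promoting different subsets of optional wildcards can match the same substring, and the multiplicity can grow with $B-A$; thus $\sum_{p'} occ_{p'}$ is not bounded by $occ$, and your argument as written only gives $O(2^{B-A}(m+B+occ))$. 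To be fair, the paper's own proof does not explicitly argue that its in-search branching avoids duplicate reports either (different branches of equal depth can report the same start/end pair), so your proof is at roughly the same level of rigor as the source; but the specific sentence ``a match of $p$ at a substring of $t$ uniquely determines\ldots so every occurrence of $p$ is reported by exactly one derived pattern'' conflates a \emph{match} (a particular way of aligning $p$) with an \emph{occurrence} (a reported start/end pair), and the conclusion you draw from it is false. If you intend $occ$ to count matches rather than distinct occurrence pairs, say so explicitly; otherwise the duplicate issue needs to be addressed, not asserted away.
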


\noindent These results completely generalize our previous solutions, since if the query pattern only contains variable length gaps of the form $\gap{1}{1}$, the problem reduces to string indexing for patterns with wildcards. In that case $A=B=j$ and we obtain exactly \autoref{optimalspaceindex}, \autoref{wst-index} and \autoref{optimaltimeindex}.

Compared to the only known index for the problem by Lam~et~al.~\cite{Lam2007}, \autoref{vlg-optimalspaceindex} gives an unbounded optional wildcard index that matches the $O(n)$ space usage, but improves the worst-case query time from $\Theta(Bjn)$ to $O \left (m + 2^{B-A}\sigma^B \log\log n + occ \right )$, provided that $B \leq \log_\sigma \sqrt{nj}$.

\section{Preliminaries}\label{sec:preliminaries}
We introduce the following notation. Let $p=p_0 \wildcard p_1  \wildcard \ldots \wildcard p_j$ be a pattern consisting of $j+1$ strings $p_0,p_1,\ldots,p_{j} \in \Sigma^*$ (subpatterns) interleaved by $j \leq k$ wildcards. The substring starting at position $l \in \{1,\ldots,n\}$ in $t$ is an occurrence of $p$ if and only if each subpattern $p_i$ matches the corresponding substring in $t$.  That is,
\[
p_i = t\left [l+i+\sum_{r=0}^{i-1} |p_r|, l+i-1+\sum_{r=0}^{i} |p_r| \right ] \quad \text{for} \quad  i=0,1,\ldots,j \; ,
\]
where $t[i,j]$ denotes the substring of $t$ between indices $i$ and $j$, both inclusive.
We define $t[i,j]=\epsilon$ for $i>j$, $t[i,j]=t[1,j]$ for $i<1$ and $t[i,j]=t[i,|t|]$ for $j>|t|$. Furthermore $m=\sum_{r=0}^j |p_r|$ is the number of characters in $p$, and we assume without loss of generality that $m > 0$ and $k>0$. 

Let $\pref_i(t) = t[1,i]$ and $\suff_i(t) = t[i,n]$ denote the prefix and suffix of $t$ of length $i$ and $n-i+1$, respectively. Omitting the subscripts, we let $\pref(t)$ and $\suff(t)$ denote the set of all non-empty prefixes and suffixes of $t$, respectively. We extend the definitions of prefix and suffix to sets of strings $S \subseteq \Sigma^*$ as follows.
\begin{align*}
\pref_i(S) &= \{ \pref_i(x) \mid x \in S\} \qquad &\suff_i(S) &= \{ \suff_i(x) \mid x \in S\} \\
\pref(S) &= \bigcup_{x \in S} \pref(x) \qquad &\suff(S) &= \bigcup_{x \in S} \suff(x)
\end{align*}
A set of strings $S$ is \emph{prefix-free} if no string in $S$ is a prefix of another string in $S$. Any string set $S$ can be made prefix-free by appending the same unique character $\$ \notin \Sigma$ to each string in $S$.

\subsection{Trees and Tries}
For a tree $T$, the root is denoted $\operatorname{root}(T)$, while $\height(T)$ is the number of edges on a longest path from $\operatorname{root}(T)$ to a leaf of $T$. A compressed trie $T(S)$ is a tree storing a prefix-free set of strings $S \subset \Sigma^*$. The edges are labeled with substrings of the strings in $S$, such that a path from the root to a leaf corresponds to a unique string in $S$. All internal vertices (except the root) have at least two children, and all labels on the outgoing edges of a vertex have different initial characters.

 A \emph{location} $\ell \in T(S)$ may refer to either a vertex or a position on an edge in $T(S)$. Formally, $\ell = (v,s)$ where $v$ is a vertex in $T(S)$ and $s \in \Sigma^*$ is a prefix of the label on an outgoing edge of $v$. If $s=\epsilon$, we also refer to $\ell$ as an \emph{explicit vertex}, otherwise $\ell$ is called an \emph{implicit vertex}. There is a one-to-one mapping between locations in $T(S)$ and unique prefixes in $\pref(S)$. The prefix $x \in \pref(S)$ corresponding to a location $\ell \in T(S)$ is obtained by concatenating the edge labels on the path from $\operatorname{root}(T(S))$ to $\ell$. Consequently, we use $x$ and $\ell$ interchangeably, and we let $|\ell|=|x|$ denote the length of $x$. Since $S$ is assumed prefix-free, each leaf of $T(S)$ is a string in $S$, and conversely. The \emph{suffix tree} for $t$ denotes the compressed trie over all suffixes of $t$, i.e., $T(\suff(t))$. We define $T_\ell(S)$ as the subtrie of $T(S)$ rooted at $\ell$. That is, $T_\ell(S)$ contains the suffixes of strings in $T(S)$ starting from $\ell$. Formally, $T_\ell(S) = T(S_\ell)$, where
\[
S_\ell = \left \{ \suff_{|\ell|}(x) \mid x \in S \wedge \pref_{|\ell|}(x) = \ell \right \} \; .
\]
\subsection{Heavy Path Decomposition}
For a vertex $v$ in a rooted tree $T$, we define $\weight(v)$ to be the number of leaves in $T_v$, where $T_v$ denotes the subtree rooted at $v$. We define $\weight(T) = \weight(\operatorname{root}(T))$. The \emph{heavy path decomposition} of $T$, introduced by Harel~and~Tarjan~\cite{harel1984fast}, classifies each edge as either \emph{light} or \emph{heavy}. For each vertex $v \in T$, we classify the edge going from $v$ to its child of maximum weight (breaking ties arbitrarily) as heavy. The remaining edges are light. This construction has the property that on a path from the root to any vertex, $O(\log(\weight(T)))$ heavy paths are traversed.
For a heavy path decomposition of a compressed trie $T(S)$, we assume that the heavy paths are extended such that the label on each light edge contains exactly one character.

\section{The LCP Data Structure}\label{sec:lcp}

Cole~et~al.~\cite{Cole2004} introduced the the \emph{Longest Common Prefix (LCP) data structure}, which provides a way to traverse a compressed trie without tracing the query string one character at a time. In this section we give a brief, self-contained description of the data structure and show a new property that is essential for obtaining \autoref{optimalspaceindex}.

The LCP data structure stores a collection of compressed tries $T(C_1),
 T(C_2),$ $\ldots, T(C_q)$ over the string sets $C_1,C_2,\ldots,C_q \subset \Sigma^*$. Each $C_i$ is a set of substrings of the indexed string $t$. The purpose of the LCP data structure is to support LCP queries
\begin{description}
\item $\lcp(x,i,\ell)$: Returns the location in $T(C_i)$ where the search for the string $x \in \Sigma^*$ stops when starting in location $\ell \in T(C_i)$.
\end{description}
If $\ell$ is the root of $T(C_i)$, we refer to the above LCP query as a \emph{rooted LCP query}. Otherwise the query is called an \emph{unrooted LCP query}. In addition to the compressed tries $T(C_1),\ldots,T(C_q)$, the LCP data structure also stores the suffix tree for $t$, denoted $T(C)$ where $C = \suff(t)$. The following lemma is implicit in the paper by Cole~et~al.~\cite{Cole2004}.

\begin{lemma}[Cole~et~al.~\cite{Cole2004}]\label{lcp-ds}
Provided $x$ has been preprocessed in time $O(|x|)$, the LCP data structure can answer rooted LCP queries on $T(C_i)$ for any suffix of $x$ in time $O(\log\log |C|)$ using space $O(|C| + \sum_{i=1}^q |C_i|)$. Unrooted LCP queries on $T(C_i)$ can be performed in time $O(\log\log |C|)$ using $O(|C_i|\log |C_i|)$ additional space.
\end{lemma}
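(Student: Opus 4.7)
The plan is to reduce each LCP query in $T(C_i)$ to a predecessor/successor query among the leaves of $T(C_i)$, exploiting the fact that every leaf string is a substring of $t$ and can therefore be identified with a location in the suffix tree $T(C)$. First I would build $T(C)$ together with a constant-time LCA oracle and assign each node its DFS rank. Then, for every $C_i$, I would label each leaf of $T(C_i)$ by the DFS rank of the corresponding node in $T(C)$ and insert the resulting set into a $Y$-fast trie supporting predecessor/successor queries in $O(\log\log |C|)$ time. This accounts for the $O(|C| + \sum_i |C_i|)$ space bound.

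The preprocessing of $x$ consists of extending $T(C)$ to a generalized suffix tree of $t$ and $x$ in $O(|x|)$ time and rebuilding its LCA oracle in the same bound. After this, the length of the longest common prefix between any suffix $y$ of $x$ and the string of any leaf of any $T(C_i)$ can be read off in $O(1)$ time via a single LCA query on the generalized suffix tree. The preprocessing is independent of which suffix of $x$ will later be queried, so the $O(|x|)$ cost is amortised across all subsequent rooted LCP queries, which is what the lemma requires.

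For a rooted query $\lcp(y,i,\rootvertex)$, I would look up the DFS rank of $y$ (obtained from its location in the generalized suffix tree), query the $Y$-fast trie for $C_i$ to obtain predecessor and successor leaves $l_p$ and $l_s$, and set $d = \max(|\lcp(y,\str(l_p))|,|\lcp(y,\str(l_s))|)$ via two LCA queries. The answer is then the ancestor of $l_p$ (or $l_s$) at string depth $d$ in $T(C_i)$, which a weighted level-ancestor structure precomputed on each $T(C_i)$ returns in $O(1)$ time. The total query time is the $O(\log\log |C|)$ spent in the $Y$-fast trie. Correctness relies on the standard observation that the LCP of $y$ with any leaf string equals the LCP of $y$ with either its predecessor or its successor in DFS order, so the two candidates suffice.

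For the unrooted case I would add a heavy path decomposition of each $T(C_i)$ and, for every heavy path, store a separate $Y$-fast trie over the DFS ranks of the leaves descending from it, together with the auxiliary data needed to restrict predecessor queries to any subtrie $T_\ell(C_i)$. Because each leaf sits below $O(\log |C_i|)$ heavy paths, the total extra space is $O(|C_i|\log |C_i|)$ as claimed. An unrooted query starting at $\ell$ uses an initial heavy-path lookup to identify the correct trie and then repeats the predecessor/LCA/level-ancestor sequence from the rooted case, preserving the $O(\log\log |C|)$ query time. The main obstacle I anticipate is arranging the coupling between the DFS ordering of $T(C)$ and the internal structure of each $T(C_i)$ so that predecessor queries, LCA queries, and level-ancestor queries cooperate in $O(1)$ per probe without inflating the space beyond the stated bounds; once this common framework is in place, the remaining argument is routine bookkeeping.
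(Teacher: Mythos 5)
The paper does not prove this lemma; it is cited as implicit in Cole et al.\ and used as a black box, so there is no in-paper argument to compare against. Judged on its own, your reconstruction captures the right ingredients (an LCA oracle on the suffix tree $T(C)$, predecessor structures over leaf positions for each $T(C_i)$, a heavy path decomposition for the unrooted case), but it has a genuine gap in the preprocessing step.

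The claim that one can ``extend $T(C)$ to a generalized suffix tree of $t$ and $x$ in $O(|x|)$ time and rebuild its LCA oracle in the same bound'' does not hold. Inserting the suffixes of $x$ into $T(C)$ does add only $O(|x|)$ nodes, but the standard constant-time LCA oracles (Euler tour with sparse tables, or Schieber--Vishkin) are static and take time linear in the \emph{whole} tree to rebuild, which is $\Theta(|C|+|x|)$, not $O(|x|)$. Cole et al.\ avoid this: they leave $T(C)$ and its LCA oracle untouched, compute in $O(|x|)$ time the matching statistics of $x$ against $T(C)$ (the locus in $T(C)$ of the longest prefix of each suffix of $x$ that occurs in $t$), and separately build an LCA/LCP structure on the suffix tree of $x$ alone. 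An LCP between a suffix $y$ of $x$ and a substring $s$ of $t$ is then recovered by combining the locus of $y$ with an LCA query on $T(C)$, and an LCP between two suffixes of $x$ comes from the structure built on $x$; no rebuilding of the oracle over $T(C)$ is needed. This is the piece that makes the $O(|x|)$ preprocessing bound achievable, and it is missing from your argument. A related loose end is that you compare the ``DFS rank of $y$'' against ranks assigned to leaves of $T(C_i)$ from $T(C)$: since $y$ is not a node of $T(C)$, you must explain how the locus of $y$ is positioned between two consecutive leaves of $T(C)$ so that the $Y$-fast trie query is meaningful; the matching-statistics locus provides exactly this.

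The unrooted part is also stated too optimistically. You acknowledge ``auxiliary data to restrict predecessor queries to any subtrie $T_\ell(C_i)$'' without saying what it is, and a single predecessor lookup is not obviously sufficient because the search string to be located is $\ell$ concatenated with the remaining pattern, which need not correspond to any node of $T(C)$. Your space accounting ($O(|C_i|\log|C_i|)$ from $O(\log|C_i|)$ heavy paths above each leaf) is correct, but the query mechanism needs the same locus/LCA machinery as above to be made precise.
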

We extend the LCP data structure by showing that support for slower unrooted LCP queries on a compressed trie $T(C_i)$ can be added using linear additional space.
\begin{lemma}\label{lcp-ext}
Unrooted LCP queries on $T(C_i)$ can be performed in time $O(\log|C_i| + \log\log |C|)$ using $O(|C_i|)$ additional space.
\end{lemma}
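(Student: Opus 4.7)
The plan is to combine a heavy path decomposition of $T(C_i)$ with the observation that every string appearing in $T(C_i)$ is already a substring of $t$. I would first heavy-path-decompose $T(C_i)$, extending paths so every light edge carries exactly one character. Because $C_i \subseteq \operatorname{substr}(t)$, each heavy path $H$ has a label $s(H)$ that is itself a substring of $t$; for each $H$, I would store a representative occurrence $p_H$ with $s(H) = t[p_H \ldots p_H + |s(H)| - 1]$, read off from any leaf of $T(C_i)$ descending from $H$. I would also store, for each explicit vertex, a pointer to its containing heavy path and its offset along it. All this fits in $O(|C_i|)$ additional space. In parallel I would augment the shared suffix tree $T(C)$ with the standard suffix array / LCP array / RMQ, giving constant-time LCP of any two suffixes of $t$; this occupies $O(|C|)$ space, which is already covered by the $|C|$ term in \autoref{lcp-ds}.

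To answer $\lcp(x, i, \ell)$, I would first do one rooted LCP query on $T(C)$ for $x$, which by \autoref{lcp-ds} costs $O(\log\log|C|)$ and returns the longest prefix $q$ of $x$ occurring in $t$, together with one occurrence $p_q$ of $q$ in $t$. The key point is that every descending path from $\ell$ in $T(C_i)$ is labeled by a substring of $t$, so no match of $x$ in $T(C_i)$ can extend beyond $|q|$ characters and, while it lasts, it agrees with $q$; consequently the entire downward simulation reduces to LCPs between substrings of $t$. I would then walk down $T(C_i)$ one heavy path at a time. At the current location on heavy path $H$ with offset $o$, having already consumed $j$ characters of $x$, I would compute in $O(1)$ the LCP of $t[p_H + o \ldots p_H + |s(H)| - 1]$ and $t[p_q + j \ldots p_q + |q| - 1]$ using RMQ on the LCP array. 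If the LCP is strictly less than both remaining lengths, I return the resulting location on $H$; if the LCP exhausts $q$ first, I stop (no further extension is possible); otherwise the LCP exhausts $H$, in which case I look up the unique light child whose single-character label matches the next character of $x$ via the trie's child table in $O(1)$, advance $j$ accordingly, and continue on the next heavy path.

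Since any downward path in $T(C_i)$ crosses $O(\log|C_i|)$ heavy paths, the main loop runs $O(\log|C_i|)$ times doing $O(1)$ work per iteration, so the total query time is $O(\log\log|C|) + O(\log|C_i|) = O(\log|C_i| + \log\log|C|)$, and the extra space is $O(|C_i|)$, matching the statement. The main obstacle I anticipate is getting the divergence bookkeeping exactly right: I must argue carefully that if the matched prefix along $H$ stops strictly inside a compressed edge, the procedure reports that implicit location without ever consulting a child table, whereas if it stops at an explicit vertex I correctly query the light children, and that the alignment of offsets $o$ on $H$ and $j$ on $x$ relative to the anchors $p_H$ and $p_q$ in $t$ is maintained across heavy-path transitions. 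Modulo this verification, every ingredient — heavy path decomposition, per-path substring representatives, and constant-time substring-LCP via the suffix array — is standard and composes cleanly.
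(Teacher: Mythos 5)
Your approach takes a genuinely different route from the paper's. The paper delegates the per-heavy-path distance computation to Cole et al.'s mechanism: the distance $h$ that the search for $x$ follows each heavy path is read off in $O(1)$ time from nearest common ancestor queries on $T(C)$ against the preprocessed locations of the suffixes of $x$ in the suffix tree. You instead anchor $x$ in $t$ with one rooted LCP query to obtain the longest prefix $q$ of $x$ occurring in $t$ together with an occurrence $p_q$, store a representative occurrence $p_H$ per heavy path, and reduce every remaining comparison to a constant-time LCP of two suffixes of $t$ via suffix array, LCP array and RMQ --- which is valid because, as you observe, any descending path in $T(C_i)$ is labeled by a substring of $t$, so the match is capped by $|q|$ and agrees with $q$ while it lasts. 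This is a more self-contained reduction that achieves the same bounds while leaning only on the rooted part of \autoref{lcp-ds} and off-the-shelf suffix-array machinery.

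Two things in the write-up do not survive scrutiny, however. First, the case split is wrong as stated: a heavy path descends all the way to a leaf, so the event ``the LCP exhausts $H$'' means the match has reached a leaf of $T(C_i)$, where there are no children --- light or otherwise --- to look up, and the light-child lookup cannot live in that branch. The light-child continuation must instead be tested inside your first case (LCP strictly less than both remaining lengths): when the divergence depth $o+\delta$ lands on an explicit vertex $v$, the character $x[j+\delta+1]$ may match a light edge leaving $v$ and the search must recurse; only if it lands strictly inside a compressed edge, or no such light child exists, may you stop. You flag this bookkeeping concern yourself, but as formulated the algorithm would terminate in exactly the transitions where it must branch. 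Second, converting the final character depth along the last heavy path into a location $(v,s)$ of $T(C_i)$ requires a predecessor query over the explicit vertices on that heavy path, costing $O(\log\log|C|)$ as in the paper's proof; your $O(1)$-per-iteration claim is therefore too strong, though the overall $O(\log|C_i| + \log\log|C|)$ bound still holds since that predecessor query is needed only once.
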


\begin{proof}
We initially create a heavy path decomposition for all compressed tries $T(C_1), \ldots, T(C_q)$. The search path for $x$ starting in $\ell$ traverses a number of heavy paths in $T(C_i)$. Intuitively, an unrooted LCP query can be answered by following the $O(\log |C_i|)$ heavy paths that the search path passes through. For each heavy path, the next heavy path can be identified in constant time. On the final heavy path, a predecessor query is needed to determine the exact location where the search path stops.

For a heavy path $H$, we let $h$ denote the distance that the search path for $x$ follows $H$. Cole~et~al.~\cite{Cole2004} showed that $h$ can be determined in constant time by performing nearest common ancestor queries on $T(C)$. To answer $\lcp(x, i, \ell)$ we identify the heavy path $H$ of $T(C_i)$ that $\ell$ is part of and compute the distance $h$ as described by Cole~et~al. If $x$ leaves $H$ on a light edge, indexing distance $h$ into $H$ from $\ell$ yields an explicit vertex $v$. At $v$, a constant time lookup for $x[h+1]$ determines the light edge on which $x$ leaves $H$. Since the light edge has a label of length one, the next location $\ell^\prime$ on that edge is the root of the next heavy path. We continue the search for the remaining suffix of $x$ from $\ell^\prime$ recursively by a new unrooted LCP query $\lcp(\suff_{h+2}(x),i,\ell^\prime)$. If $H$ is the heavy path on which the search for $x$ stops, the location at distance $h$ (i.e., the answer to the original LCP query) is not necessarily an explicit vertex, and may not be found by indexing into $H$. In that case a predecessor query for $h$ is performed on $H$ to determine the preceding explicit vertex and thereby the location $\lcp(x,i,\ell)$. Answering an unrooted LCP query entails at most $\log |C_i|$ recursive steps, each taking constant time. The final recursive step may require a predecessor query taking time $O(\log\log |C|)$. Consequently, an unrooted LCP query can be answered in time $O(\log|C_i| + \log\log |C|)$ using $O(|C_i|)$ additional space to store the predecessor data structures for each heavy path.
\end{proof}

\section{An Unbounded Wildcard Index Using Linear Space}\label{sec:optimalspaceindex}
In this section we show how to obtain \autoref{optimalspaceindex} by applying an ART decomposition on the suffix tree for $t$ and storing the top and bottom trees in the LCP data structure. 

\subsection{ART Decomposition}\label{artdecomposition}
The ART decomposition introduced by Alstrup~et~al.~\cite{Alstrup1998} decomposes a tree into a single \emph{top tree} and a number of \emph{bottom trees}. The construction is defined by two rules:
\begin{enumerate}
	\item A bottom tree is a subtree rooted in a vertex of minimal depth such that the subtree contains no more than $\chi$ leaves. 
	\item Vertices that are not in any bottom tree make up the top tree.
\end{enumerate}

The decomposition has the following key property.

\begin{lemma}[Alstrup et al.~\cite{Alstrup1998}]\label{lem:art-top}
	The ART decomposition with parameter $\chi$ for a rooted tree $T$ with $n$ leaves produces a top tree with at most $\frac{n}{\chi + 1}$ leaves.
\end{lemma}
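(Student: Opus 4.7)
The plan is to bound the number of top-tree leaves by showing that each one sits atop a disjoint ``chunk'' of the original tree containing strictly more than $\chi$ leaves of $T$. I would begin by pinning down which vertices are bottom-tree roots under the natural greedy reading of the definition (``minimal depth'' interpreted along each root-to-leaf path): $v$ is a bottom-tree root iff $T_v$ contains at most $\chi$ leaves while its parent's subtree contains more than $\chi$ leaves (with $v = \operatorname{root}(T)$ as the degenerate case where $T$ itself is a bottom tree and the top tree is empty; we may assume $n > \chi$). A direct consequence is that every vertex of the top tree has a $T$-subtree with more than $\chi$ leaves, and every $T$-child of a top-tree leaf is a bottom-tree root.

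Next, let $u_1,\ldots,u_L$ be the leaves of the top tree. For any such $u_i$, all $T$-children of $u_i$ are bottom-tree roots, so every proper $T$-descendant of $u_i$ lies in some bottom tree and hence is not a top-tree vertex. Therefore the $u_i$ are pairwise incomparable in $T$, and the subtrees $T_{u_1},\ldots,T_{u_L}$ are pairwise disjoint. I already observed that each $T_{u_i}$ contains at least $\chi + 1$ leaves of $T$; summing these disjoint leaf counts gives
\[
L(\chi + 1) \;\leq\; \sum_{i=1}^{L} |\operatorname{leaves}(T_{u_i})| \;\leq\; n,
\]
which yields $L \leq n/(\chi + 1)$ as desired.

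The only real subtlety is reading the definition of the decomposition correctly, so that bottom trees are placed as high as possible and every top-tree leaf really is ``just above'' at least $\chi + 1$ original leaves. Once that greedy characterization is established, the remainder of the proof is a one-line counting argument on disjoint subtrees.
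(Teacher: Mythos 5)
The paper does not supply a proof of this lemma; it is stated and cited directly from Alstrup et al., so there is no internal proof to compare against. Your argument is, however, a correct and complete standalone proof. The key characterization — that a vertex $v$ belongs to the top tree if and only if $T_v$ has more than $\chi$ leaves, which holds because leaf-counts of subtrees are monotone along root-to-leaf paths and the ART rule places bottom-tree roots greedily as high as possible — is exactly the right reading of the definition, and from it the rest follows cleanly: top-tree leaves are pairwise incomparable in $T$ (a proper $T$-descendant of a top-tree leaf lies strictly inside a bottom tree), so their $T$-subtrees are disjoint, each contributes at least $\chi+1$ of the $n$ leaves of $T$, and the bound $L \leq n/(\chi+1)$ drops out by counting. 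You also correctly handle the degenerate case $n \leq \chi$ (empty top tree). One could add a sentence observing that no leaf of $T$ can be a top-tree vertex when $\chi \geq 1$, which is what guarantees each $T_{u_i}$ genuinely straddles a bottom tree, but this is already implicit in your subtree-size characterization. In short: the paper outsources this lemma to a citation, and your proposal fills that gap correctly with the standard disjoint-subtree counting argument.
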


\subsection{Obtaining the Index}\label{optimalspace-structure}
Applying an ART decomposition on $T(\suff(t))$ with $\chi = \log n$, we obtain a top tree $T^\prime$ and a number of bottom trees $B_1, B_2, \ldots, B_q$ each of size at most $\log n$. From \autoref{lem:art-top}, $T^\prime$ has at most $\frac{n}{\log n}$ leaves and hence $O(\frac{n}{\log n})$ vertices since $T^\prime$ is a compressed trie.

To facilitate the search, the top and bottom trees are stored in an LCP data structure, noting that these compressed tries only contain substrings of $t$. Using \autoref{lcp-ext}, we add support for unrooted $O(\log \chi + \log\log n)  = O(\log \log n)$ time LCP queries on the bottom trees using $O(n)$ additional space in total. For the top tree we apply \autoref{lcp-ds} to add support for unrooted LCP queries in time $O(\log\log n)$ using $O(\frac{n}{\log n} \log \frac{n}{\log n}) = O(n)$ additional space. Since the branching factor is not reduced, $O(\sigma^i)$ LCP queries, each taking time $O(\log \log n)$, are performed for the subpattern $p_i$. This concludes the proof of \autoref{optimalspaceindex}.

\section{A Time-Space Trade-Off for $k$-Bounded Wildcard Indexes}\label{sec:tradeoff}
In this section we will show \autoref{wst-index}. We first introduce the necessary constructions.

\subsection{Heavy $\alpha$-Tree Decomposition}\label{alphaheavysec}
The \emph{heavy $\alpha$-tree decomposition} is a generalization of the well-known heavy path decomposition introduced by Harel~and~Tarjan~\cite{harel1984fast}. The purpose is to decompose a rooted tree $T$ into a number of \emph{heavy trees} joined by light edges, such that a path to the root of $T$ traverses at most a logarithmic number of heavy trees. For use in the construction, we define a \emph{proper weight function} on the vertices of $T$, to be a function satisfying
\(
\weight(v) \geq \sum_{w \text{ child of }v} \weight(w) \; .
\)
Observe that using the number of vertices or the number of leaves in the subtree rooted at $v$ as the weight of $v$ satisfies this property. The decomposition is then constructed by classifying edges in $T$ as being heavy or light according to the following rule.
For every vertex $v \in T$, the edges to the $\alpha$ heaviest children of $v$ (breaking ties arbitrarily) are heavy, and the remaining edges are light.
For $\alpha=1$ this results in a heavy path decomposition. Given a heavy $\alpha$-tree decomposition of $T$, we define $\lightdepth_\alpha(v)$ to be the number of light edges on a path from the vertex $v \in T$ to the root of $T$. The key property of this construction is captured by the following lemma.
\begin{lemma}\label{lightdepth}
For any vertex $v$ in a rooted tree $T$ and $\alpha > 0$
\[
\lightdepth_\alpha(v) \leq \log_{\alpha+1} \weight(\operatorname{root}(T))
\]
\end{lemma}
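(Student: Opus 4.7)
The plan is to show that each light edge on a root-to-$v$ path reduces the weight by a factor of at least $\alpha+1$, and then telescope.

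First I would consider an arbitrary light edge $(u,w)$ on the path from $\rootvertex(T)$ to $v$, with $u$ being the parent of $w$. Since $w$ is a light child of $u$, by construction $u$ has at least $\alpha$ heavy children, each of weight at least $\weight(w)$ (otherwise $w$ would be among the $\alpha$ heaviest and thus heavy). Combined with $w$ itself, $u$ has $\alpha+1$ children each of weight at least $\weight(w)$, and by the proper weight property
\[
\weight(u) \;\geq\; \sum_{x\text{ child of }u} \weight(x) \;\geq\; (\alpha+1)\,\weight(w).
\]
Heavy edges, by contrast, give only the trivial bound $\weight(u) \geq \weight(w)$.

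Next I would walk the path from $v$ to $\rootvertex(T)$ and apply this inequality once for each of the $\lightdepth_\alpha(v)$ light edges encountered. Writing the path as $v=v_0,v_1,\ldots,v_r=\rootvertex(T)$, telescoping yields
\[
\weight(\rootvertex(T)) \;\geq\; (\alpha+1)^{\lightdepth_\alpha(v)}\,\weight(v).
\]
Since a proper weight function satisfies $\weight(v)\geq 1$ for every vertex (this is the natural setting of the decomposition, where weights count vertices or leaves in the subtree), we can drop the factor $\weight(v)$ and take logarithms base $\alpha+1$ to obtain the claimed bound
\[
\lightdepth_\alpha(v) \;\leq\; \log_{\alpha+1} \weight(\rootvertex(T)).
\]

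There is no real obstacle here; the only subtle point is justifying the per-light-edge factor of $\alpha+1$ from the proper weight property and the definition of heavy edges. The argument cleanly specializes to the classical heavy path bound when $\alpha=1$, giving $\log_2 \weight(\rootvertex(T))$ as expected.
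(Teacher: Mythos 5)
Your proof is correct and takes essentially the same approach as the paper: both hinge on showing that crossing a light edge from $u$ to $w$ forces $\weight(u) \geq (\alpha+1)\weight(w)$ because $u$ must have $\alpha$ heavy children at least as heavy as $w$, and then applying this once per light edge. The paper packages the key step as a proof by contradiction while you argue it directly and telescope explicitly, but the underlying idea is identical.
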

\begin{proof}
Consider a light edge from a vertex $v$ to its child $w$. We prove that $\weight(w) \leq \frac{1}{\alpha+1} \weight(v)$, implying that $\lightdepth_\alpha(v) \leq \log_{\alpha+1} \weight(\operatorname{root}(T))$. To obtain a contradiction, suppose that $\weight(w) > \frac{1}{\alpha+1} \weight(v)$. In addition to $w$, $v$ must have $\alpha$ heavy children, each of which has a weight greater than or equal to $\weight(w)$. Hence
\[
\weight(v) \geq (1+\alpha) \cdot \weight(w) > (1+\alpha) \cdot \frac{1}{\alpha+1} \weight(v) = \weight(v) \; ,
\]
which is a contradiction.
\end{proof}
\autoref{lightdepth} holds for any heavy $\alpha$-tree decomposition obtained using a proper weight function on $T$. In the remaining part of the paper we will assume that the weight of a vertex is the number of leaves in the subtree rooted at $v$. See \autoref{fig:heavyalphatreedecomposition} for two different examples of heavy $\alpha$-tree decompositions.

We define $\lightheight_\alpha (T)$ to be the maximum light depth of a vertex in $T$, and remark that for $\alpha=0$, $\lightheight_\alpha(T)=\height(T)$. For a vertex $v$ in a compressed trie $T(S)$, we let $\lightstrings(v)$ denote the set of strings starting in one of the light edges leaving $v$. That is, $\lightstrings(v)$ is the union of the set of strings in the subtries $T_\ell(S)$ where $\ell$ is the first location on a light outgoing edge of $v$, i.e., $|\ell|=|v|+1$.

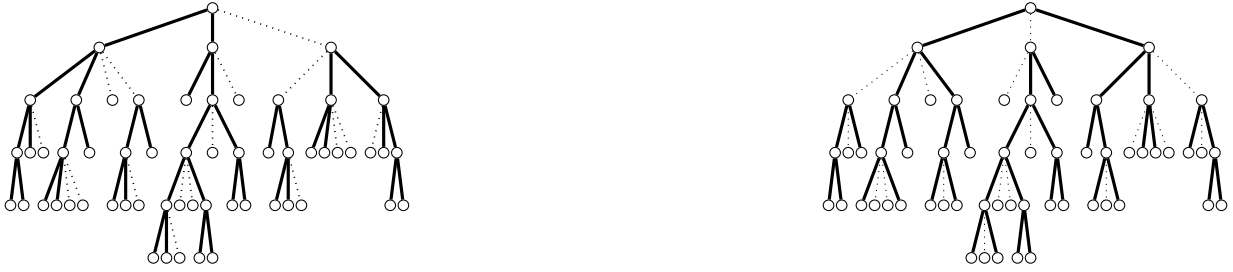
\begin{figure}\centering
\subfigure{
\begin{tikzpicture}[scale=0.35,
font=\scriptsize,
grow=down,
level 1/.style={level distance=1.5cm,sibling distance=4.5cm},
level 2/.style={level distance=2cm}
]
\begin{scope}[every child/.style={defnode},
every node/.style={defnode},
edge from parent/.style={heavyedge}
]
\def\ledge{edge from parent [lightedge]};
  \node (root) {}
  child [sibling distance=4.3cm]{
    node (a) {}
    child[sibling distance=1.75cm] {
      node (a1) {}
      child[sibling distance=0.5cm]{
        node (a2) {}
        child[sibling distance=0.5cm] { node (a3) {} }
        child[sibling distance=0.5cm] { node (a4) {} }
      }
      child [sibling distance=0.5cm]{ node (a5) {} }
      child [sibling distance=0.5cm]{ node  {} \ledge }
    }
    child [sibling distance=1.75cm] {
      node (b1) {}
      child [sibling distance=1cm] {
        node (b2) {}
        child [sibling distance=0.5cm]{ node  {} }
        child [sibling distance=0.5cm]{ node  {} }
        child [sibling distance=0.5cm]{ node  {} \ledge }
        child [sibling distance=0.5cm]{ node  {} \ledge }
      }
      child [sibling distance=1cm] { node {} }
    }
    child [sibling distance=1cm] { node {} \ledge }
    child [sibling distance=1cm] {
      node {}
      child [sibling distance=1cm] {
        node {}
        child [sibling distance=0.5cm]{ node  {} }
        child [sibling distance=0.5cm]{ node  {} }
        child [sibling distance=0.5cm]{ node  {} \ledge }
      }
      child [sibling distance=1cm] { node {} }
      \ledge
    }
   }
   child {
      node {}
      child [sibling distance=1cm] { node {} }
      child [sibling distance=1cm] {
        node {}
        child [sibling distance=1cm]{
          node  {}
          child [sibling distance=0.5cm] {
            node {}
            child [sibling distance=0.5cm] { node {} }
            child [sibling distance=0.5cm] { node {} }
            child [sibling distance=0.5cm] { node {} \ledge }
          }
          child [sibling distance=0.5cm] { node {} \ledge }
          child [sibling distance=0.5cm] { node {} \ledge }
          child [sibling distance=0.5cm] {
            node {}
            child [sibling distance=0.5cm] { node {} }
            child [sibling distance=0.5cm] { node {} }
          }
        }
        child [sibling distance=1cm]{ node  {} \ledge }
        child [sibling distance=1cm]{
          node  {}
          child [sibling distance=0.5cm] { node {} }
          child [sibling distance=0.5cm] { node {} }
        }
      }
      child [sibling distance=1cm] { node {} \ledge }
    }
   child {
     node {}
     child [sibling distance=2cm] {
       node {}
       child [sibling distance=0.75cm] { node {} }
       child [sibling distance=0.75cm] {
         node {}
         child [sibling distance=0.5cm] { node {} }
         child [sibling distance=0.5cm] { node {} }
         child [sibling distance=0.5cm] { node {} \ledge }
       }
       \ledge
     }
     child [sibling distance=2cm] {
       node {}
       child [sibling distance=0.5cm] { node {} }
       child [sibling distance=0.5cm] { node {} }
       child [sibling distance=0.5cm] { node {} \ledge }
       child [sibling distance=0.5cm] { node {} \ledge }
     }
     child [sibling distance=2cm] {
       node{}
       child [sibling distance=0.5cm] { node {} \ledge }
       child [sibling distance=0.5cm] { node {} }
       child [sibling distance=0.5cm] {
         node{}
         child [sibling distance=0.5cm] { node {} }
         child [sibling distance=0.5cm] { node {} }
       }
     }
     \ledge
   };
\end{scope}
\end{tikzpicture}
}
\hfill
\subfigure{
\begin{tikzpicture}[scale=0.35,
font=\scriptsize,
grow=down,
level 1/.style={level distance=1.5cm,sibling distance=4.5cm},
level 2/.style={level distance=2cm}
]
\begin{scope}[every child/.style={defnode},
every node/.style={defnode},
edge from parent/.style={heavyedge,very thick,draw}
]
\def\ledge{edge from parent [lightedge, thin]};
  \node (root) {}
  child [sibling distance=4.3cm]{
    node (a) {}
    child[sibling distance=1.75cm] {
      node (a1) {}
      child[sibling distance=0.5cm]{
        node (a2) {}
        child[sibling distance=0.5cm] { node (a3) {} }
        child[sibling distance=0.5cm] { node (a4) {} }
      }
      child [sibling distance=0.5cm]{ node {} \ledge }
      child [sibling distance=0.5cm]{ node {} }
      \ledge
    }
    child [sibling distance=1.75cm] {
      node (b1) {}
      child [sibling distance=1cm] {
        node (b2) {}
        child [sibling distance=0.5cm]{ node  {} }
        child [sibling distance=0.5cm]{ node  {} \ledge }
        child [sibling distance=0.5cm]{ node  {} \ledge }
        child [sibling distance=0.5cm]{ node  {} }
      }
      child [sibling distance=1cm] { node {} }
    }
    child [sibling distance=1cm] { node {} \ledge }
    child [sibling distance=1cm] {
      node {}
      child [sibling distance=1cm] {
        node {}
        child [sibling distance=0.5cm]{ node  {} }
        child [sibling distance=0.5cm]{ node  {} \ledge }
        child [sibling distance=0.5cm]{ node  {} }
      }
      child [sibling distance=1cm] { node {} }
    }
   }
   child {
      node {}
      child [sibling distance=1cm] { node {} \ledge }
      child [sibling distance=1cm] {
        node {}
        child [sibling distance=1cm]{
          node  {}
          child [sibling distance=0.5cm] {
            node {}
            child [sibling distance=0.5cm] { node {} }
            child [sibling distance=0.5cm] { node {} \ledge }
            child [sibling distance=0.5cm] { node {}  }
          }
          child [sibling distance=0.5cm] { node {} \ledge }
          child [sibling distance=0.5cm] { node {} \ledge }
          child [sibling distance=0.5cm] {
            node {}
            child [sibling distance=0.5cm] { node {} }
            child [sibling distance=0.5cm] { node {} }
          }
        }
        child [sibling distance=1cm]{ node  {} \ledge }
        child [sibling distance=1cm]{
          node  {}
          child [sibling distance=0.5cm] { node {} }
          child [sibling distance=0.5cm] { node {} }
        }
      }
      child [sibling distance=1cm] { node {} }
      \ledge
    }
   child {
     node {}
     child [sibling distance=2cm] {
       node {}
       child [sibling distance=0.75cm] { node {} }
       child [sibling distance=0.75cm] {
         node {}
         child [sibling distance=0.5cm] { node {} }
         child [sibling distance=0.5cm] { node {} \ledge }
         child [sibling distance=0.5cm] { node {} }
       }
     }
     child [sibling distance=2cm] {
       node {}
       child [sibling distance=0.5cm] { node {} \ledge }
       child [sibling distance=0.5cm] { node {} }
       child [sibling distance=0.5cm] { node {} }
       child [sibling distance=0.5cm] { node {} \ledge }
     }
     child [sibling distance=2cm] {
       node{}
       child [sibling distance=0.5cm] { node {} }
       child [sibling distance=0.5cm] { node {} \ledge }
       child [sibling distance=0.5cm] {
         node{}
         child [sibling distance=0.5cm] { node {} }
         child [sibling distance=0.5cm] { node {} }
       }
       \ledge
     }
   };
\end{scope}
\end{tikzpicture}
}
\caption{Two different heavy $\alpha$-tree decompositions with $\alpha=2$ of a tree with $n=38$ leaves. The maximum light depth is $3$ and $2$, respectively, in agreement with \autoref{lightdepth}.\label{fig:heavyalphatreedecomposition}}
\end{figure}

\subsection{Wildcard Trees}\label{wildcardtrees}
We introduce the \emph{$(\beta,k)$-wildcard tree}, denoted $T_\beta^k(C^\prime)$, where $1 \leq \beta < \sigma$ is a chosen parameter. This data structure stores a collection of strings $C^\prime \subset \Sigma^+$ in a compressed trie such that the search for a pattern $p$ with at most $k$ wildcards branches to at most $\beta$ locations in $T_\beta^k(C^\prime)$ when consuming a single wildcard of $p$. In particular for $\beta=1$, the search for $p$ never branches and the search time becomes linear in the length of $p$. For a vertex $v$, we define the wildcard height of $v$ to be the number of wildcards on the path from $v$ to the root. Intuitively, given a wildcard tree that supports $i$ wildcards, support for an extra wildcard is added by joining a new tree to each vertex $v$ with wildcard height $i$ by an edge labeled $\wildcard$. This tree is searched if a wildcard is consumed in $v$. Formally, $T_\beta^k(C^\prime)$ is built recursively as follows.

\begin{quote}\textbf{Construction of $T_\beta^i(S)$}: 
Produce a heavy $(\beta-1)$-tree decomposition of $T(S)$, then for each internal vertex $v \in T(S)$ join $v$ to the root of $T_\beta^{i-1}(\suff_2(\lightstrings(v))$ by an edge labeled $\wildcard$. Let $T_\beta^0(S) = T(S)$.
\end{quote}

\noindent The construction is illustrated in \autoref{fig:wildcardtreeconstructionabstract}. Since a leaf $\ell$ in a compressed trie $T(S)$ is obtained as the suffix of a string $x \in C^\prime$, we assume that $\ell$ inherits the label of $x$ in case the strings in $C^\prime$ are labeled. For example, when $C^\prime$ denotes the suffixes of $t$, we will label each suffix in $C^\prime$ with its start position in $t$. This immediately provides us with a $k$-bounded wildcard index. \autoref{fig:wildcardsearchtreeconstruction} shows some concrete examples of the construction of $T_\beta^k(C^\prime)$ when $C^\prime$ is a set of labeled suffixes. 

\begin{figure}[H]\centering \footnotesize
\begin{tikzpicture}[scale=1]
\draw (2.5,4) coordinate(r);
\draw (0,0) coordinate(left);
\draw (5,0) coordinate(right);

\draw (6.2,2.5) coordinate(r2);
\draw (5.2,1) coordinate(left2);
\draw (7.2,1) coordinate(right2);

\draw (6.3,1.8) coordinate(v2);

\draw (10,2.3) coordinate(r3);
\draw (9.5,1.3) coordinate(left3);
\draw (10.5,1.3) coordinate(right3);

\draw (6,3.8) coordinate(r4);
\draw (5,1.8) coordinate(left4);
\draw (7,1.8) coordinate(right4);

\draw (6.1,3.3) coordinate(v4);

\draw (6,0.9) coordinate(r5);
\draw (5.5,0) coordinate(left5);
\draw (6.5,0) coordinate(right5);

\draw (6.08,0.5) coordinate(v5);

\draw (10.2,3.5) coordinate(r6);
\draw (9.7,2.5) coordinate(left6);
\draw (10.7,2.5) coordinate(right6);

\draw (10.2,1) coordinate(r7);
\draw (9.7,0) coordinate(left7);
\draw (10.7,0) coordinate(right7);

\draw (11,1.8) coordinate(r8);
\draw (10.5,0.8) coordinate(left8);
\draw (11.5,0.8) coordinate(right8);

\draw (11.1,3.1) coordinate(r9);
\draw (10.6,2.1) coordinate(left9);
\draw (11.6,2.1) coordinate(right9);

\draw ($(left)!0.5!(right)$) coordinate(leaf);
\draw ($(r)!0.6!(leaf)$) coordinate(v);
\foreach \x/\y in {-45/1,-35/2,-25/3,-15/4,-5/5,5/6,15/7,25/8,35/9,45/10}
\draw ($(v)!20pt!\x:(leaf)$) coordinate(child\y);
\draw[fill=gray,opacity=0.4] (r) -- (left) -- (right) -- cycle;
\draw[fill=gray,opacity=0.04] (r4) -- (left4) -- (right4) -- cycle;
\draw[fill=gray,opacity=0.04] (r5) -- (left5) -- (right5) -- cycle;
\draw[fill=gray,opacity=0.1] (r3) -- (left3) -- (right3) -- cycle;
\draw[fill=gray,opacity=0.1] (r6) -- (left6) -- (right6) -- cycle;
\draw[fill=gray,opacity=0.1] (r7) -- (left7) -- (right7) -- cycle;
\draw[fill=gray,opacity=0.1] (r8) -- (left8) -- (right8) -- cycle;
\draw[fill=gray,opacity=0.1] (r9) -- (left9) -- (right9) -- cycle;
\draw[fill=gray,opacity=0.5] (r2) -- (left2) -- (right2) -- cycle;

\draw (r2) node[defnode,fill=white,label={above:$T_\beta^{k-1}(\suff_2(\lightstrings(v)))$}](r2node) {};

\draw (r3) node[defnode,draw=black!20,fill=white](r3node) {};
\draw (r4) node[defnode,draw=black!20,fill=white](r4node) {};
\draw (r5) node[defnode,draw=black!20,fill=white](r5node) {};
\draw (r6) node[defnode,draw=black!20,fill=white](r6node) {};
\draw (r7) node[defnode,draw=black!20,fill=white](r7node) {};
\draw (r8) node[defnode,draw=black!20,fill=white](r8node) {};
\draw (r9) node[defnode,draw=black!20,fill=white](r9node) {};

\draw (v2) node[defnode,fill=white](v2node) {};

\draw (v4) node[defnode,fill=white,opacity=0.2](v4node) {};

\draw (v5) node[defnode,fill=white,opacity=0.2](v5node) {};

\draw (r3node) edge[bend right=15,opacity=0.2] node[midway,above] {$\wildcard$} ($(r3node)!0.3!(r2node)$);

\draw (r6node) edge[bend right=15,opacity=0.2] node[midway,above] {$\wildcard$} ($(r6node)!0.3!(r4node)$);
\draw (r7node) edge[bend right=15,opacity=0.2] node[near end,above] {$\wildcard$} ($(r7node)!0.3!(r5node)$);
\draw (r8node) edge[bend right=15,opacity=0.2] node[near end,above] {$\wildcard$} ($(r8node)!0.4!(r2node)$);
\draw (r9node) edge[bend right=15,opacity=0.2] node[near end,above] {$\wildcard$} ($(r9node)!0.4!(v4node)$);

\draw (v4node) edge[bend left=10,opacity=0.2] node[near end,label={above:$\wildcard$}] {} ($(v4node)!0.3!(r6node)$);

\draw (v2node) edge[bend left=10] node[near end,label={above:$\wildcard$}] {} ($(v2node)!0.3!(r3node)$);

\draw (v5node) edge[bend left=10,opacity=0.2] node[near end,label={above:$\wildcard$}] {} ($(v5node)!0.4!(r7node)$);

\begin{scope}[opacity=0.3]
\draw ($(r)!0.2!(v)$) node[defnode,fill=white] {} edge[bend left=20] node[midway,above] {$\wildcard$} (r4node);
\draw ($(v)!0.3!(right)$) node[defnode,fill=white] {} edge[bend left=20] node[midway,above] {$\wildcard$} (r5node);
\end{scope}

\draw (v)
node[defnode,label={above:$v$}] (vnode) {}
\foreach \x in {child1,child2,child4}
{ edge [heavyedge] (\x) }
\foreach \x in {child6,child7,child9}
{ edge [lightedge] (\x) }
;
\draw[densely dotted] ($(v)!0.8!(child2)$) -- ($(v)!0.8!(child4)$);
\draw[densely dotted] ($(v)!0.8!(child7)$) -- ($(v)!0.8!(child9)$);

\draw (vnode) edge[bend left=20] node[midway,label={above:\footnotesize $\wildcard$}] {} (r2node);

\draw[decorate,decoration={brace,raise=2pt,amplitude=3pt}] (child4) -- (child1) node[midway,below=3pt,xshift=-4pt] {\scriptsize $\beta -1$};
\draw[decorate,decoration={brace,raise=2pt,amplitude=3pt}] (child9) -- (child6) node[midway,below=2pt,xshift=25pt] {\scriptsize $\lightstrings(v)$};

\draw[decorate,decoration={brace,raise=5pt,amplitude=3pt}] (right) -- (left) node[midway,below=6pt] {\scriptsize $T_\beta^0(C^\prime)$};

\draw[decorate,decoration={brace,raise=5pt,amplitude=3pt}] (7.2,-0.5) -- (0,-0.5) node[midway,below=6pt] {\scriptsize $T_\beta^1(C^\prime)$};

\draw[decorate,decoration={brace,raise=5pt,amplitude=3pt}] (11.6,-1) -- (0,-1) node[midway,below=6pt] {\scriptsize $T_\beta^k(C^\prime)$};


\draw[line width=2,loosely dotted, draw=black!60] (7.5,2.2) -- (8.9,2.2);
\end{tikzpicture}
\caption{Illustrating of the recursive construction of the wildcard tree $T_\beta^k(C^\prime)$. The final tree consists of $k$ layers of compressed tries joined by edges labeled $\wildcard$.\label{fig:wildcardtreeconstructionabstract}
}
\end{figure}
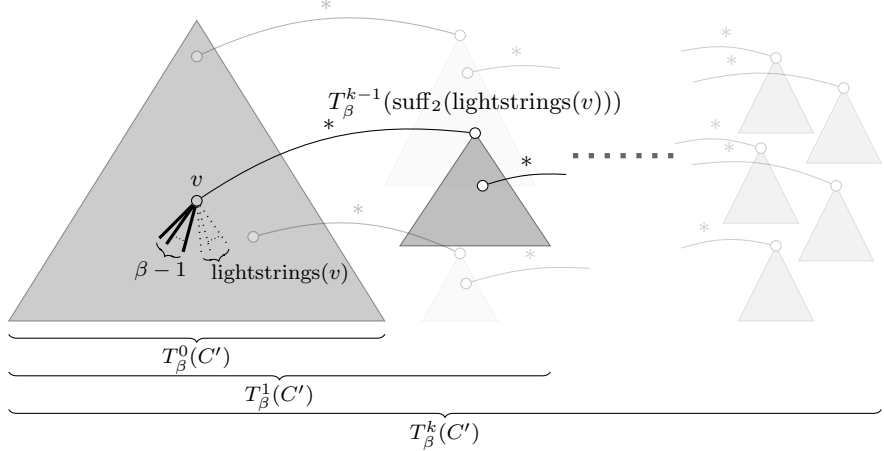

\begin{figure}[p]\centering
\subfigure[$T_1^2(C)$.]{
\begin{tikzpicture}[scale=0.45,
font=\scriptsize,
grow=down,
level distance=3cm,
level 1/.style={sibling distance=9.5cm},
level 2/.style={sibling distance=5cm},
level 3/.style={sibling distance=1.6cm}]
\begin{scope}[every child/.style={defnode},
every node/.style={defnode,outer sep=0.6pt},
lab/.style={draw=none,rectangle,sloped,allow upside down,fill=white,font=\scriptsize\ttfamily},
lvl1/.style={},
lvl2/.style={double},
lvl3/.style={double,label={[font=\tiny]center:\tikz \draw circle (0.02);}}
]
\def\pedge{edge from parent [lightedge] node[lab]};
  \node[lvl1]{}
  child[sibling distance=3.5cm] {
    node{}
    child[sibling distance=2.4cm] {
      node {}
      child[sibling distance=1cm] {node[label=below:2]{} \pedge {nas\$}}
      child[sibling distance=1cm] {node[label=below:4]{} \pedge {s\$}}
      child[sibling distance=1cm,every child node/.style={defnode,lvl2}] {
        node{}
        child[sibling distance=1cm] {node[label=below:2]{} \pedge {as\$}}
        child[sibling distance=1cm] {node[label=below:4]{} \pedge {\$}}
        child[sibling distance=1cm,every child node/.style={lvl3}] {
          node{}
          child {node[label=below:2]{} \pedge {s\$}}
          \pedge {$\wildcard$}
        }
        \pedge {$\wildcard$}
      }
      \pedge {na}
    }
    child[sibling distance=2.1cm] {node[label=below:6]{} \pedge {s\$}}
    child[sibling distance=2.1cm,every child node/.style={lvl2}] {
      node{}
      child[sibling distance=1cm] {
        node {}
        child[sibling distance=1cm] {node[label=below:2]{} \pedge {nas\$}}
        child[sibling distance=1cm] {node[label=below:4]{} \pedge {s\$}}
        child[sibling distance=1cm,every child node/.style={lvl3}] {
          node{}
          child[sibling distance=1cm] {node[label={below:2}]{} \pedge {as\$}}
          child[sibling distance=1cm] {node[label=below:4]{} \pedge {\$}}
          \pedge {$\wildcard$}
        }
        \pedge {a}
      }
      child[sibling distance=1cm] {node[label=below:6]{} \pedge {\$}}
      child[sibling distance=1cm,every child node/.style={lvl3}] {
        node{}
        child[sibling distance=1cm] {node[label=below:2]{} \pedge {nas\$}}
        child[sibling distance=1cm] {node[label=below:4]{} \pedge {s\$}}
        \pedge {$\wildcard$}
      }
      \pedge {$\wildcard$}
    }
    \pedge {a}
  }
  child[sibling distance=4cm] {
    node[label=below:1]{} \pedge {bananas\$}
  }
  child[sibling distance=9cm] {
        node{}
        child[sibling distance=1cm] {node[label=below:3]{} \pedge {nas\$}}
        child[sibling distance=1cm] {node[label=below:5]{} \pedge {s\$}}
        child[sibling distance=1cm,every child node/.style={lvl2}] {
          node{}
          child[sibling distance=1cm] {node[label=below:3]{} \pedge {as\$}}
          child[sibling distance=1cm] {node[label=below:5]{} \pedge {\$}}
          child[sibling distance=1cm,every child node/.style={lvl3}] {
            node{}
            child {node[label=below:3]{} \pedge {s\$}}
            \pedge {$\wildcard$}
          }
          \pedge {$\wildcard$}
        }
       \pedge {na}
    }
    child[missing,sibling distance=0cm]
    child[sibling distance=1cm] {
      node[label=below:7]{} \pedge {s\$}
    }
    child[sibling distance=3.2cm,every child node/.style={lvl2}] {
      node{}
      child[sibling distance=3.3cm] {
        node{}
        child[sibling distance=1.5cm] {
          node {}
          child[sibling distance=1cm] {node[label=below:1]{} \pedge {nas\$}}
          child[sibling distance=1cm] {node[label=below:3]{} \pedge {s\$}}
          child[sibling distance=1cm,every child node/.style={lvl3}] {
            node{}        
            child[sibling distance=1cm] {node[label=below:1]{} \pedge {as\$}}
            child[sibling distance=1cm] {node[label=below:3]{} \pedge {\$}}
            \pedge {$\wildcard$}}
          child[missing]
          \pedge {na}
        }
        child[sibling distance=1.5cm] {node[label=below:5]{} \pedge {s\$}}
        child[sibling distance=1.5cm,every child node/.style={lvl3}] {
          node{}
          child[sibling distance=1cm] {
            node {}
            child[sibling distance=1cm] {node[label=below:1]{} \pedge {nas\$}}
            child[sibling distance=1cm] {node[label=below:3]{} \pedge {s\$}}
            \pedge {a}
          }
          child[sibling distance=1cm] {node[label=below:5]{} \pedge {\$}}
          \pedge {$\wildcard$}
        }
        \pedge {a}
      }
      child[sibling distance=2.5cm] {
          node {}
          child[sibling distance=1.5cm] {node[label=below:2]{} \pedge {nas\$}}
          child[sibling distance=1.5cm] {node[label=below:4]{} \pedge {s\$}}
          child[sibling distance=1.5cm,every child node/.style={lvl3}] {
            node{}
            child[sibling distance=1cm] {node[label=below:2]{} \pedge {as\$}}
            child[sibling distance=1cm] {node[label=below:4]{} \pedge {\$}}
            \pedge {$\wildcard$}
          }
          \pedge {na}
        }
      child[sibling distance=1cm] {node[label=below:6]{} \pedge {s\$}}
      child[sibling distance=1.5cm] {node[label=below:7]{} \pedge {\$}}
      child[sibling distance=2cm,every child node/.style={lvl3}] {
        node{}
        child[sibling distance=2cm] {
          node{}
          child[sibling distance=1cm] {node[label=below:2]{} \pedge {nas\$}}
          child[sibling distance=1cm] {node[label=below:4]{} \pedge {s\$}}
          \pedge {a}
        }
        child[sibling distance=2cm] {
          node{}
          child[sibling distance=1cm] {node[label=below:1]{} \pedge {nas\$}}
          child[sibling distance=1cm] {node[label=below:3]{} \pedge {s\$}}
          \pedge {na}
        }
        child[sibling distance=1cm] {node[label=below:5]{} \pedge {s\$}}
        child[sibling distance=1cm] {node[label=below:6]{} \pedge {\$}}
        \pedge {$\wildcard$}}
      \pedge {$\wildcard$}
    }
;
\end{scope}
\end{tikzpicture}
}
\subfigure[$T^2_2(C)$.]{
\begin{tikzpicture}[scale=0.45,
font=\scriptsize,
grow=down,
level distance=3cm,
level 1/.style={sibling distance=9.5cm},
level 2/.style={sibling distance=5cm},
level 3/.style={sibling distance=1.6cm},
lvl1/.style={},
lvl2/.style={double},
lvl3/.style={double,label={[font=\tiny]center:\tikz \draw circle (0.02);}},
baseline=(current bounding box.north)]
\draw[use as bounding box] (-7,0.5) (5,-13);
\begin{scope}[every child/.style={defnode},
every node/.style={defnode},
lab/.style={draw=none,rectangle,sloped,allow upside down,fill=white,font=\scriptsize\ttfamily}
]
\def\ledge{edge from parent [lightedge] node[lab]};
\def\hedge{edge from parent [heavyedge] node[lab]};
  \node{}
  child[sibling distance=2.5cm] {
    node{}
    child[sibling distance=1cm] {
      node {}
      child[sibling distance=1cm] {node[label=below:2]{} \hedge {nas\$}}
      child[sibling distance=1cm] {node[label=below:4]{} \ledge {s\$}}
      child[sibling distance=1cm,every child node/.style={lvl2}] {
        node {}
        child[sibling distance=1cm] {node[label=below:4] {} \hedge {\$}}
        \ledge {$\wildcard$}
      }
      \hedge {na}
    }
    child[sibling distance=1cm] {node[label=below:6]{} \ledge {s\$}}
    child[sibling distance=1cm,every child node/.style={lvl2}] {
      node {}
      child[sibling distance=1cm] {node[label=below:6] {} \hedge {\$}}
      \ledge {$\wildcard$}
    }
    \hedge {a}
  }
  child[sibling distance=2.5cm] {
    node[label=below:1]{} \ledge {bananas\$}
  }
  child[sibling distance=2cm] {
    node{}
    child[sibling distance=1cm] {node[label=below:3]{} \hedge {nas\$}}
    child[sibling distance=1cm] {node[label=below:5]{} \ledge {s\$}}
    child[sibling distance=1cm,every child node/.style={lvl2}] {
      node {}
      child[sibling distance=1cm] {node[label=below:5] {} \hedge {\$}}
      \ledge {$\wildcard$}
    }
    \ledge {na}
  }
  child[sibling distance=2.5cm] {
    node[label=below:7]{} \ledge {s\$}
  }
  child[sibling distance=2.5cm,every child node/.style={lvl2}] {
    node {}
    child [sibling distance=2cm]{
      node {}
      child [sibling distance=1cm]{
        node{}
        child[sibling distance=1cm] {node[label=below:1] {} \hedge {nas\$}}
        child[sibling distance=1cm] {node[label=below:3] {} \ledge {s\$}}
        child[sibling distance=1cm,every child node/.style={lvl3}] {
          node {}
          child[sibling distance=1cm,level distance=1.5cm] {node[label=below:3] {} \hedge {\$}}
          \ledge {$\wildcard$}
        }
        \hedge {na}
      }
      child[sibling distance=1cm] {node[label=below:5] {} \ledge {s\$}}
      child[sibling distance=1cm,every child node/.style={lvl3}] {
        node {}
        child[sibling distance=1cm] {node[label=below:5] {} \hedge {\$}}
        \ledge {$\wildcard$}
      }
      \hedge {a}
    }
    child[sibling distance=2cm] {node[label=below:7] {} \ledge {\$}}
    \ledge {$\wildcard$}
  }
;
\end{scope}
\end{tikzpicture}
}
\hfill
\subfigure[$T_3^2(C)$.]{
\begin{tikzpicture}[scale=0.45,
font=\scriptsize,
grow=down,
level distance=3cm,
level 1/.style={sibling distance=9.5cm},
level 2/.style={sibling distance=5cm},
level 3/.style={sibling distance=1.6cm},
lvl1/.style={},
lvl2/.style={double},
lvl3/.style={double,label={[font=\tiny]center:\tikz \draw circle (0.02);}},
baseline=(current bounding box.north)
]
\draw[use as bounding box] (-6,0.5) (6,-13);
\begin{scope}[every child/.style={defnode},
every node/.style={defnode},
lab/.style={draw=none,rectangle,sloped,allow upside down,fill=white,font=\scriptsize\ttfamily}
]
\def\ledge{edge from parent [lightedge] node[lab]};
\def\hedge{edge from parent [heavyedge] node[lab]};
  \node{}
  child[sibling distance=2.5cm] {
    node{}
    child[sibling distance=1cm] {
      node {}
      child[sibling distance=1cm] {node[label=below:2]{} \hedge {nas\$}}
      child[sibling distance=1cm] {node[label=below:4]{} \hedge {s\$}}
      \hedge {na}
    }
    child[sibling distance=1cm] {node[label=below:6]{} \hedge {s\$}}
    \hedge {a}
  }
  child[sibling distance=2.5cm] {
    node[label=below:1]{} \ledge {bananas\$}
  }
  child[sibling distance=2cm] {
    node{}
    child[sibling distance=1cm] {node[label=below:3]{} \hedge {nas\$}}
    child[sibling distance=1cm] {node[label=below:5]{} \hedge {s\$}}
    \hedge {na}
  }
  child[sibling distance=2.5cm] {
    node[label=below:7]{} \ledge {s\$}
  }
  child[sibling distance=2.5cm,every child node/.style={lvl2}] {
    node {}
    child[sibling distance=1cm] {node[label=below:1] {} \hedge {ananas\$}}
    child[sibling distance=1cm] {node[label=below:7] {} \hedge {\$}}
    \ledge {$\wildcard$}
  }
;
\end{scope}
\end{tikzpicture}
}
\caption{Showing $T_\beta^k(C)$ for $\beta \in \{1,2,3\}$, $k=2$ and $C=\suff(\texttt{bananas\$})$. The recursion levels 0, 1, 2 in the construction are indicated by increasing growth rings in the vertices. All edges in $T_1^2(C)$ are light, since the construction is based on a heavy $\alpha$-tree decomposition with $\alpha=\beta-1=0$. Leaves are labeled with the start position of their corresponding suffix in $t$. \label{fig:wildcardsearchtreeconstruction}}
\end{figure}
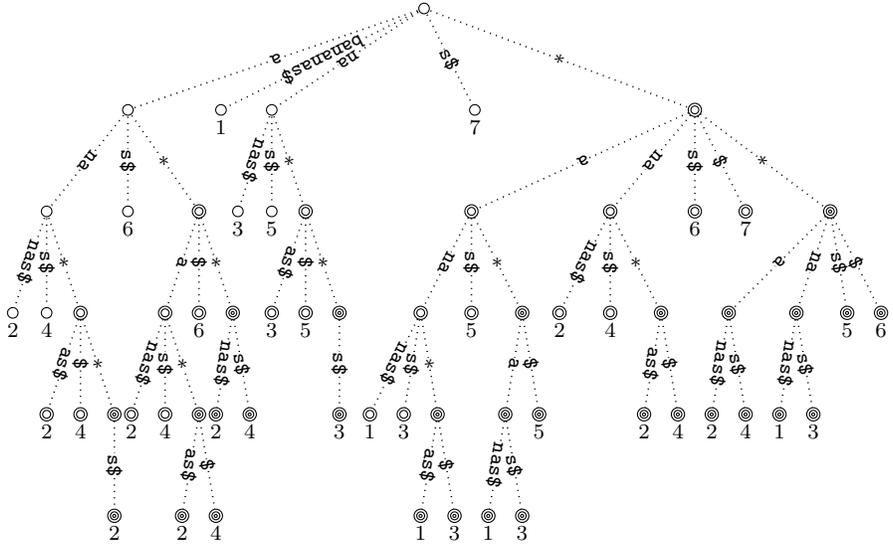
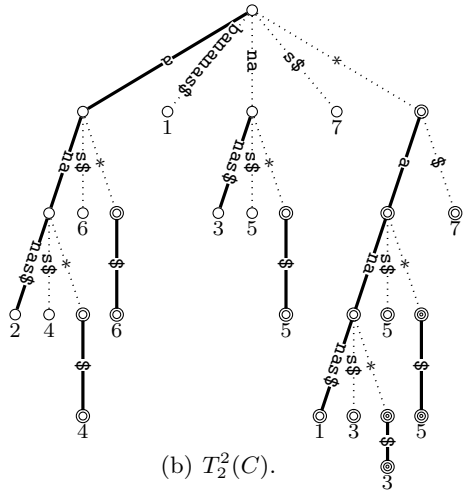
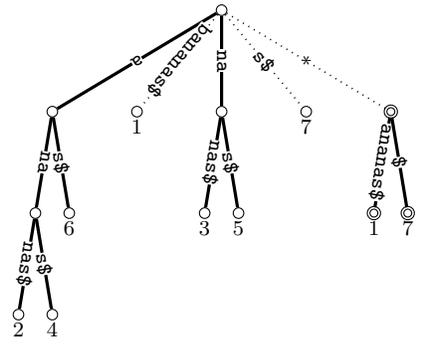


\subsection{Wildcard Tree Index}\label{wildcardtreeindex}
Given a collection $C^\prime$ of strings and a pattern $p$, we can identify the strings of $C^\prime$ having a prefix matching $p$ by constructing $T_\beta^k(C^\prime)$. Searching $T_\beta^k(C^\prime)$ is similar to the suffix tree search, except when consuming a wildcard character of $p$ in an explicit vertex $v \in T_\beta^k(C^\prime)$ with more than $\beta$ children. In that case the search branches to the root of the wildcard tree joined to $v$ and to the first location on the $\beta-1$ heavy edges of $v$, effectively letting the wildcard match the first character on all edges from $v$. Consequently, the search for $p$ branches to a total of at most $\sum_{i=0}^j \beta^i = O(\beta^j)$ locations, each of which requires $O(m)$ time, resulting in a query time $O(\beta^j m + occ)$. For $\beta=1$ the query time is $O(m+j+occ)$.

\begin{lemma}\label{wildcard-search-tree}
For any integer $1 \leq \beta < \sigma$, the wildcard tree $T_\beta^k(C^\prime)$ has query time $O(\beta^jm+j+occ)$. The wildcard tree stores $O(|C^\prime| H^k)$ strings, where $H$ is an upper bound on the light height of all compressed tries $T(S)$ satisfying $S \subseteq \suff_d(C^\prime)$ for some integer $d$.
\end{lemma}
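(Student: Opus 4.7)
My plan is to prove the query-time and space bounds separately.

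For the query time, I would formalize the branching argument already sketched in \autoref{wildcardtreeindex} by viewing the search as growing a tree of ``configurations''. Initially there is a single configuration at the root of $T_\beta^k(C^\prime)$. A non-wildcard character of $p$ moves each configuration along the unique matching edge (or kills it). A wildcard character at a configuration located at an explicit vertex $v$ creates up to $\beta$ new configurations: the first location on each of the $\beta-1$ heavy outgoing edges and the root of the wildcard tree attached to $v$ by the $\wildcard$ edge. After the $w$-th wildcard has been processed, there are at most $\beta^w$ active configurations, so the total work splits as
\[
\sum_{w=0}^{j} \beta^w |p_w| \;+\; \sum_{w=1}^{j} \beta^w \;=\; O(\beta^j m + j),
\]
where the first sum accounts for matching non-wildcard characters of each subpattern $p_w$ against every active configuration and the second accounts for the wildcard-edge transitions. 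The $+j$ term handles the degenerate case $\beta = 1$. Adding $occ$ for reporting yields the stated query time.

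For the space bound my plan is induction on $k$ based on the recursive construction of $T_\beta^i(S)$. The main lever is the following double-counting identity: for any compressed trie $T(S)$ with light height at most $H$,
\[
\sum_{v} |\lightstrings(v)| \;\leq\; H \cdot |S|,
\]
where $v$ ranges over internal vertices. The identity holds because each $x \in S$ appears in $\lightstrings(v)$ for exactly those ancestors $v$ of the leaf $x$ whose next edge toward $x$ is light, and there are at most $H$ such ancestors along any root-to-leaf path.

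Let $f(S, k)$ be the number of strings stored in $T_\beta^k(S)$. The construction yields the recurrence $f(S, 0) = |S|$ and
\[
f(S, k) \;=\; |S| \;+\; \sum_{v} f\!\bigl(\suff_2(\lightstrings(v)),\, k-1\bigr),
\]
summed over internal vertices of $T(S)$. Since $\suff_2(\lightstrings(v)) \subseteq \suff_{d+|v|+1}(C^\prime)$ whenever $S \subseteq \suff_d(C^\prime)$, the hypothesis that $H$ bounds the light height is preserved at every level of the recursion. Applying the inductive bound together with the counting identity gives $f(S, k) = O(|S| H^k)$, and setting $S = C^\prime$ yields the claim. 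The main obstacle is verifying that every string set encountered inside the recursion is a subset of $\suff_d(C^\prime)$ for some $d$, since this invariant is precisely what lets the single bound $H$ govern the light height at every level; once it is in place, both the double-counting step and the inductive substitution are mechanical.
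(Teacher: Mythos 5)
Your proposal is correct and follows essentially the same strategy as the paper: a branching argument for the query time (which the paper actually presents in the text immediately preceding the lemma) and induction on the number of wildcard levels combined with a double-counting argument over light edges for the space bound. You are somewhat more careful than the paper in making two points explicit: the formal ``tree of configurations'' bookkeeping that yields $\sum_{w=0}^{j}\beta^w |p_w| + \sum_{w=1}^{j}\beta^w = O(\beta^j m + j)$, and the verification that $\suff_2(\lightstrings(v)) \subseteq \suff_{d+|v|+1}(C^\prime)$, so that a single integer $d'$ works at every level of the recursion. The paper merely observes that ``$S$ is also a set of suffixes of $C^\prime$'' and lets the reader supply this; your explicit check is a mild but genuine improvement in rigor, though not a different proof route.
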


\begin{proof}
We prove that the total number of strings (leaves) in $T_\beta^i(S)$, denoted $|T_\beta^i(S)|$, is at most $|S|\sum_{j=0}^i H^j = O(|S|H^i)$. The proof is by induction on $i$. The base case $i=0$ holds, since $T_\beta^0(S)=T(S)$ contains $|S| = |S| \sum_{j=0}^0 H^j$ strings. For the inductive step, assume that $|T_\beta^i(S)| \leq |S| \sum_{j=0}^i H^j$. Let $S_v=\suff_2(\lightstrings(v))$ for a vertex $v \in T(S)$. From the construction we have that the number of strings in $T_\beta^{i+1}(S)$ is the number of strings in $T(S)$ plus the number of strings in the wildcard trees joined to the vertices of $T(S)$. That is,
\[
\bigl| T_\beta^{i+1}(S) \bigr| ~=~ \bigl|S \bigr| + \sum_{v \in T(S)} \bigl| T_\beta^i(S_v) \bigr| ~ \stackrel{IH}{\leq } ~ \bigl| S \bigr| + \sum_{v \in T(S)} \bigl| S_v \bigr| \sum_{j=0}^i H^j\; .
\]
The string sets $S_v$ consist of suffixes of strings in $S$. Consider a string $x \in S$, i.e., a leaf in $T(S)$. The number of times a suffix of $x$ appears in a set $S_v$ is equal to the light depth of $x$ in $T(S)$. $S$ is also a set of suffixes of $C^\prime$, and hence $H$ is an upper bound on the maximum light depth of $T(S)$. This establishes that
\(
\sum_{v \in T(S)} |S_v| ~\leq~ |S| H \; ,
\)
thus showing that $|T_\beta^{i+1}(S)| \leq |S| + |S|H \sum_{j=0}^i H^j = |S| \sum_{j=0}^{i+1} H^j$.
\end{proof}
\noindent Constructing the wildcard tree $T_\beta^k(C)$, where $C = \suff(t)$, we obtain a wildcard index with the following properties.
\begin{lemma}\label{wildcard-tree-index}
Let $t$ be a string of length $n$ from an alphabet of size $\sigma$. For $2 \leq \beta < \sigma$ there is a $k$-bounded wildcard index for $t$ using $O \bigl( n \log_\beta^{k} n \bigr )$ space. The index can report the occurrences of a pattern with $m$ characters and $j \leq k$ wildcards in time $O \left (\beta^j m + occ \right )$.
\end{lemma}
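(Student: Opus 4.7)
The plan is to directly instantiate the wildcard tree construction of \autoref{wildcard-search-tree} on the suffix collection $C = \suff(t)$, i.e.\ to take the index to be $T_\beta^k(C)$ with each leaf labelled by the starting position in $t$ of the corresponding suffix. A pattern occurrence then corresponds exactly to a leaf in the subtrie of $T_\beta^k(C)$ reached by the wildcard-aware search, so reporting the $occ$ starting positions is just a traversal of the reached subtries.

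For the query time I would invoke \autoref{wildcard-search-tree} directly: the search for a pattern with $j \le k$ wildcards branches to $O(\beta^j)$ locations, each search path costs $O(m)$ character comparisons, and the reported leaves contribute $O(occ)$. Since $\beta \ge 2$ and $m \ge 1$, the $j$ additive term is absorbed into $\beta^j m$, yielding the stated bound $O(\beta^j m + occ)$.

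The space bound is the main obstacle, because \autoref{wildcard-search-tree} gives $O(|C|\,H^k) = O(n H^k)$ strings only in terms of a parameter $H$ that bounds the \emph{light height} of every compressed trie $T(S)$ with $S \subseteq \suff_d(C)$ for some $d$. I therefore need to verify $H = O(\log_\beta n)$ for the heavy $(\beta-1)$-tree decompositions used inside the construction. For this I would apply \autoref{lightdepth} with $\alpha = \beta-1$, which gives $\lightdepth_{\beta-1}(v) \le \log_\beta \weight(\operatorname{root}(T(S)))$. Every such $S$ consists of (suffixes of) strings in $\suff(t)$, so $T(S)$ has at most $n$ leaves and $\weight(\operatorname{root}(T(S))) \le n$, establishing $H \le \log_\beta n$. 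Plugging this into the count from \autoref{wildcard-search-tree} yields $O(n \log_\beta^{k} n)$ leaves in $T_\beta^k(C)$.

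Finally I would convert the leaf count into a space bound by noting that a compressed trie over $N$ strings (with labels stored by reference into $t$) has $O(N)$ vertices and edges, so the total storage of $T_\beta^k(C)$, including the special $\wildcard$-edges that join the layers of the recursion, is proportional to its number of leaves, giving the claimed $O(n \log_\beta^{k} n)$ space. The trickiest bookkeeping in a full write-up is making sure that the ``for some $d$'' condition in \autoref{wildcard-search-tree} is honoured throughout the recursion; this follows because each recursive call is made on $\suff_2(\lightstrings(v))$, which is again a set of suffixes of suffixes of $t$, so the invariant is preserved at every level.
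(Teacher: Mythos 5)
Your proposal is correct and follows essentially the same route as the paper: instantiate $T_\beta^k(\suff(t))$, cite \autoref{wildcard-search-tree} for both the $O(\beta^j m + j + occ)$ query time and the $O(nH^k)$ string count, and then cap $H$ by $\log_\beta n$ via \autoref{lightdepth} since each $T(S)$ has at most $n$ leaves. The only (correct and welcome) additions beyond the paper's terse write-up are your explicit remark that the additive $j$ term is absorbed once $\beta \ge 2$ and your check that the recursive calls on $\suff_2(\lightstrings(v))$ preserve the ``$S \subseteq \suff_d(C)$'' hypothesis of \autoref{wildcard-search-tree}.
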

\begin{proof}
The query time follows from \autoref{wildcard-search-tree}. Since $T_\beta^k(C)$ is a compressed trie, and because each edge label is a substring of $t$, the space needed to store $T_\beta^k(C)$ is upper bounded by the number of strings it contains which by \autoref{wildcard-search-tree} is $O(n H^k)$. It follows from \autoref{lightdepth} that $H=\log_\beta n$ is an upper bound on the light height of all compressed tries $T(S)$, since they each contain at most $n$ vertices. Consequently, the space needed to store the index is $O(n \log_\beta^k n)$.
\end{proof}

\subsection{Wildcard Tree Index Using the LCP Data Structure}\label{wildcardtreeindexlcp}
The wildcard index of \autoref{wildcard-tree-index} reduces the branching factor of the suffix tree search from $\sigma$ to $\beta$, but still has the drawback that the search for a subpattern $p_i$ from a location $\ell \in T_\beta^k(C)$ takes $O(|p_i|)$ time. This can be addressed by combining the index with the LCP data structure as in Cole~et~al.~\cite{Cole2004}. In that way, the search for a subpattern can be done in time $O(\log\log n)$. The index is obtained by modifying the construction of $T_\beta^i(S)$ such that each $T(S)$ is added to the LCP data structure prior to joining the $(\beta,i-1)$-wildcard trees to the vertices of $T(S)$. For all $T(S)$ except the final $T(S)=T_\beta^0(S)$, support for unrooted LCP queries in time $O(\log\log n)$ is added using additional $O(|S|\log|S|)$ space. For the final $T(S)$, searched when all $k$ wildcards have been matched, we only need support for rooted queries. Upon receiving the query pattern $p=p_1 \wildcard p_2 \wildcard \ldots \wildcard p_k$, each $p_i$ is preprocessed in time $O(|p_i|)$ to support LCP queries for any suffix of $p_i$. The search for $p$ proceeds as described for the normal wildcard tree, except now rooted and unrooted LCP queries are used to search for suffixes of $p_0,p_1,\ldots,p_k$.

In the search for $p$, a total of at most $\sum_{i=0}^j \beta^i = O(\beta^j)$ LCP queries, each taking time $O(\log\log n)$, are performed. Preprocessing $p_0,p_1,\ldots,p_j$ takes $\sum_{i=0}^j|p_i| = m$ time, so the query time is $O(m+\beta^j \log\log n + occ)$. The space needed to store the index is $O(n \log_\beta^k n)$ for $T_\beta^k(C)$ plus the space needed to store the LCP data structure. 

Adding support for rooted LCP queries requires linear space in the total size of the compressed tries, i.e., $O(n \log_\beta^k n)$. Let $T(S_0),T(S_1),\ldots,T(S_q)$ denote the compressed tries with support for unrooted LCP queries. Since each $S_i$ contains at most $n$ strings and $\sum_{i=0}^q |S_i| = |T_\beta^{k-1}(C)|$, by \autoref{lcp-ds}, the additional space required to support unrooted LCP queries is
\begin{footnotesize}
\[
O \bigl( \sum_{i=0}^q |S_i| \log |S_i| \bigr) = O \bigl( \log n \sum_{i=0}^q |S_i| \bigr)= O \left( \log n |T_\beta^{k-1} (C^\prime)| \right) = O \bigl( n\log(n) \log_\beta^{k-1} n \bigr ) \; ,
\]
\end{footnotesize}which is an upper bound on the total space required to store the wildcard index. This concludes the proof of \autoref{wst-index}. The $k$-bounded wildcard index described by Cole~et~al.~\cite{Cole2004} is obtained as a special case of \autoref{wst-index}.

\begin{corollary}[Cole~et~al.~\cite{Cole2004}]
Let $t$ be a string of length $n$ from an alphabet of size $\sigma$. There is a $k$-bounded wildcard index for $t$ using $O(n\log^k n)$ space. The index can report the occurrences of a pattern with $m$ characters and $j \leq k$ wildcards in time $O(m+2^j\log\log n + occ)$.
\end{corollary}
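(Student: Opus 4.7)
The plan is essentially to observe that the corollary is the $\beta = 2$ instantiation of \autoref{wst-index}, so the proof reduces to verifying the two bounds after substitution. First, I would invoke \autoref{wst-index} directly, noting that the precondition $2 \leq \beta < \sigma$ is satisfied (we may assume $\sigma \geq 3$, since for $\sigma \leq 2$ the trivial $O(n^{k+1})$ space index already suffices and the stated bound is not tighter in any meaningful way; alternatively, for any $\sigma \geq 2$ one can pad the alphabet).

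Next, I would verify the query time: plugging $\beta = 2$ into the expression $O(m + \beta^j \log\log n + occ)$ gives $O(m + 2^j \log \log n + occ)$, which matches the statement. For the space bound, \autoref{wst-index} yields $O(n \log(n) \log_2^{k-1} n)$ space; since $\log_2 n = \Theta(\log n)$, this simplifies to $O(n \log^k n)$ as claimed.

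The only small subtlety to check is that the constants hidden inside $\log_\beta^{k-1} n$ do not swallow a factor depending on $k$ in a problematic way; but since we fix $\beta = 2$ throughout, the change of base is just a constant factor, and the exponent $k$ appears in exactly the same way as in \autoref{wst-index}. There is no real obstacle here — the corollary is a direct specialization, and the entire proof amounts to one sentence invoking \autoref{wst-index} together with the observation that $\log_2 n = \Theta(\log n)$.
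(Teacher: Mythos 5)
Your proof is correct and takes exactly the paper's approach: the paper derives the corollary as the $\beta = 2$ special case of \autoref{wst-index}, which is precisely what you do. Your extra remarks about the precondition $\sigma > 2$ and the change of base in $\log_\beta n$ are sound but not needed beyond what the paper's one-line justification already implies.
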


\section{A $k$-Bounded Wildcard Index with Linear Query Time}\label{sec:optimaltimeindex}
%
Consider the $k$-bounded wildcard index obtained by creating the wildcard tree $T_1^k(\suff(t))$ for $t$. This index has linear query time, and we can show that the space usage depends of the height of the suffix tree. 
\begin{lemma}\label{opt-time-height}
  Let $t$ be a string of length $n$ from an alphabet of size $\sigma$. There is a $k$-bounded wildcard index for $t$ using $O(nh^k)$ space, where $h$ is the height of the suffix tree for $t$. The index can report the occurrences of a pattern with $m$ characters and $j$ wildcards in time $O(m+ j + occ)$.
\end{lemma}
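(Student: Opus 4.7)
The plan is to obtain the claimed index by instantiating the wildcard tree construction of Section~5 with parameter $\beta=1$ on $C' = \suff(t)$, and then invoke \autoref{wildcard-search-tree} directly. Since $\beta^j = 1$, that lemma immediately gives query time $O(\beta^j m + j + occ) = O(m+j+occ)$, so the entire work of the proof is to verify that the space bound in \autoref{wildcard-search-tree} simplifies to $O(n h^k)$.

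First I would observe that for $\beta=1$, the heavy $(\beta-1)$-tree decomposition selects $0$ heavy children at every vertex, so every edge is light. Hence $\lightheight(T)=\height(T)$ for every compressed trie $T$ arising in the recursive construction of $T_1^k(\suff(t))$, and the parameter $H$ from \autoref{wildcard-search-tree} coincides with the ordinary edge-height of these tries.

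Next I would bound this height by $h$. Any string in $\suff_d(\suff(t))$ is itself a suffix of $t$ (it is the suffix starting at a position shifted by $d-1$), so every $S$ considered in \autoref{wildcard-search-tree} satisfies $S \subseteq \suff(t)$. The compressed trie $T(S)$ can then be realized by starting from the suffix tree of $t$, restricting to the leaves in $S$, and contracting any internal vertex that is left with only one child. Neither restriction nor contraction can increase the number of edges on any root-to-leaf path, so $\height(T(S)) \leq h$. Substituting $H=h$ into the space bound of \autoref{wildcard-search-tree} yields $O(|C'| H^k) = O(n h^k)$, as required.

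The main thing to be careful about is the height-monotonicity argument that justifies $H = h$: one must note explicitly that compressing away non-branching internal vertices only merges edges and therefore cannot lengthen any root-to-leaf path (counted in edges), and that the sets $S$ appearing recursively are genuinely subsets of $\suff(t)$ rather than of arbitrary substrings of $t$. Both facts are immediate, but they are what makes the reduction to the single parameter $h$ go through.
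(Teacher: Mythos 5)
Your proof is correct and follows the same route as the paper: instantiate the wildcard tree with $\beta=1$, observe that light height equals height, and use closure of $\suff(t)$ under the suffix operation to justify taking $H=h$ in \autoref{wildcard-search-tree}. The paper states the closure-and-monotonicity step more tersely, but the underlying argument is identical.
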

\begin{proof}
Since $\suff(t)$ is closed under the suffix operation, the height of $T(\suff(t))$ is an upper bound on the height of all compressed tries $T(S)$ satisfying $S \subseteq \suff_d(\suff(t))$ for some $d$. For $\beta=1$, the light height of $T(S)$ is equal to the height of $T(S)$, so $H=h=\height(T(\suff(t)))$ can be used as an upper bound of the light height in \autoref{wildcard-search-tree}, and consequently the space needed to store $T_1^k(\suff(t))$ is $O(nh^k)$.
\end{proof}
In the worst case the height of the suffix tree is close to $n$, but combining the index with another wildcard index yields a useful black box reduction. The idea is to query the first index if the pattern is short, and the second index if the pattern is long.

\begin{lemma}\label{cuttop}
Let $F \geq m$ and let $G$ be independent of $m$ and $j$. Given a wildcard index $\mathcal{A}$ with query time $O(F+G+occ)$ and space usage $S$, there is a $k$-bounded wildcard index $\mathcal{B}$ with query time $O(F+j+occ)$ and taking space $O(n \min(G,h)^k + S)$, where $h$ is the height of the suffix tree for $t$.
\end{lemma}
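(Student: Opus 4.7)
The plan is to maintain two indexes in parallel and dispatch each query to whichever one is cheaper on the current pattern length. I would keep a copy of $\mathcal{A}$ unchanged (costing space $S$), and additionally build a \emph{truncated} version of the $T_1^k$ index from \autoref{opt-time-height}. Concretely, let $C_G$ be obtained from $\suff(t)$ by replacing each suffix with its length-$G$ prefix, remembering the original starting position $i \in \{1,\ldots,n\}$ on the corresponding leaf, and construct the wildcard tree $T_1^k(C_G)$.

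On a query pattern $p$ with $m$ characters and $j$ wildcards, I would route as follows. If $m + j \leq G$, I answer from $T_1^k(C_G)$. This is correct because every occurrence of $p$ at a position $i$ in $t$ is determined by the first $|p| \leq G$ characters of $\suff_i(t)$, all of which are retained in $C_G$; the query costs $O(m + j + occ) = O(F + j + occ)$ since $F \geq m$. Otherwise $m + j > G$, and I query $\mathcal{A}$; its running time $O(F + G + occ)$ equals $O(F + j + occ)$ because $F + j \geq m + j > G$ forces $G$ to be absorbed into $F + j$.

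The remaining task is to bound the space of $T_1^k(C_G)$ by $O(n \min(G, h)^k)$ and apply the counting argument from the proof of \autoref{wildcard-search-tree}. Since every string in $C_G$ is a prefix of a suffix of $t$ of length at most $G$, the compressed trie $T(C_G)$ has height at most $\min(G, h)$ (up to an additive $O(1)$ for prefix-free markers). Descending the $T_1^k$ recursion, each subsequently constructed trie is built on a set of the form $\suff_d(\lightstrings(v))$; these strings remain shifted prefixes of suffixes of $t$ truncated at length at most $G$, so the $\min(G, h)$ bound on light height persists throughout the recursion. Plugging $H = \min(G, h)$ into the string-count bound from \autoref{wildcard-search-tree} then gives $O(n \min(G, h)^k)$ space for the truncated index, and summing with $S$ yields the claimed total. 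The main subtlety I expect is precisely verifying this propagation: unlike $\suff(t)$ in \autoref{opt-time-height}, the set $C_G$ is not closed under taking suffixes, so one must explicitly check that every recursively produced set is simultaneously constrained by $G$ and by $h$.
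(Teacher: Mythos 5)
Your construction and routing match the paper's proof almost exactly: keep $\mathcal{A}$ alongside the truncated wildcard tree $T_1^k(\pref_G(\suff(t)))$, dispatch short patterns to the truncated tree and long ones to $\mathcal{A}$, and use $F \geq m$ to absorb both cases into $O(F+j+occ)$. The one place you diverge is the $\min(G,h)^k$ space bound: the paper obtains the $h$ term by simply substituting the untruncated tree $T_1^k(\suff(t))$ of \autoref{opt-time-height} for $\mathcal{C}$ whenever $G > h$, whereas you assert that the $h$-bound propagates through the truncated recursion directly. That assertion is in fact true, but it is not automatic and you leave it unproven; the argument you'd need is that every set the recursion produces is a prefix-free set of prefixes of suffixes of $t$, and each branching vertex of the compressed trie over such a set corresponds to an explicit branching vertex of $T(\suff(t))$ at the same string depth, so the trie height is at most $h+1$. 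Either supply that short argument or adopt the paper's substitution; with that fixed, your proof is correct and structurally identical to the paper's.
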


\begin{proof}
The wildcard index $\mathcal{B}$ consists of $\mathcal{A}$ as well as a special wildcard index $T_1^k(\pref_G(\suff(t)))$ $\mathcal{C}$, which is a wildcard tree with $\beta=1$ over the set of all substrings of $t$ of length $G$. $G$ can be used as an upper bound for the light height in \autoref{wildcard-search-tree}, so the space required to store $\mathcal{C}$ is $O(n \min(G,h)^k)$ by using \autoref{opt-time-height} if $G>h$. A query on $\mathcal{B}$ results in a query on either $\mathcal{A}$ or $\mathcal{C}$. In case $G < F+j$, we query $\mathcal{A}$ and the query time will be $O(F+G+occ) = O(F+j+occ)$. In case $G \geq F+j$, we query $\mathcal{C}$ with query time $O(m+j+occ) = O(F+j+occ)$. In any case the query time of $\mathcal{B}$ is $O(F+j+occ)$.
\end{proof}
Applying \autoref{cuttop} with $F=m$ and $G=\sigma^k \log\log n$ on the unbounded wildcard index from \autoref{optimalspaceindex} yields a new $k$-bounded wildcard index with linear query time using space $O(\sigma^{k^2} n \log^k \log n)$. This concludes the proof of \autoref{optimaltimeindex}.

\section{Variable Length Gaps}\label{vl-gaps}\label{sec:vlg}
We now consider the \emph{string indexing for patterns with variable length gaps problem}. By only changing the search procedure, this problem can be solved using the previously described bounded and unbounded wildcard indexes.

The string indexing for patterns with variable length gaps problem is to build an index for a string $t$ that can efficiently report the occurrences of a query pattern $p$ of the form
\[
p = p_0\; \gap{a_1}{b_1}\; p_1\; \gap{a_2}{b_2}\; \ldots\; \gap{a_j}{b_j}\; p_j \; .
\]
The query pattern consists of $j+1$ strings $p_0,p_1,\ldots,p_j \in \Sigma^*$ interleaved by $j$ \emph{variable length gaps} $\gap{a_i}{b_i}$, $i=1,\ldots,j$, where $a_i$ and $b_i$ are positive integers such that $a_i \leq b_i$. Intuitively, a variable length gap $\gap{a_i}{b_i}$ matches an arbitrary string over $\Sigma$ of length between $a_i$ and $b_i$, both inclusive.

\begin{example}\label{ex:vlg}
Consider the string $t$ and pattern $p$ over the alphabet $\Sigma = \{\texttt{a},\texttt{b},\texttt{c},\texttt{d} \}$.
\begin{align*}
t &= \texttt{acbccbacccddabdaabcdccbccdaa} \\
p &= \texttt{b} \gap{0}{4} \texttt{cc} \gap{3}{5} \texttt{d}
\end{align*}
The string $t$ contains five occurrences of the query pattern $p$ as shown in \autoref{fig:vlgexample}.
\end{example}

\begin{figure}[b]\centering
\begin{tikzpicture}[scale=0.4,yscale=0.8]\ttfamily
\draw (0,0) \foreach \x / \y in {
1/a,
2/c,
3/b,
4/c,
5/c,
6/b,
7/a,
8/c,
9/c,
10/c,
11/d,
12/d,
13/a,
14/b,
15/d,
16/a,
17/a,
18/b,
19/c,
20/d,
21/c,
22/c,
23/b,
24/c,
25/c,
26/d,
27/a,
28/a
} {
   (\x,0) node[anchor=base]{\scriptsize \textnormal{\x}}
   (\x,-1) node[anchor=base]{\texttt{\textbf{\y}}}
 }
;
\draw (0.5,0.7) -- (28.5,0.7);
\draw (0.5,-0.2) -- (28.5,-0.2);
\draw (3,-2) node[anchor=base]{b};
\draw (4,-2) node[anchor=base]{c};
\draw (5,-2) node[anchor=base]{c};
\draw (6,-2) node[anchor=base]{$\wildcard$};
\draw (7,-2) node[anchor=base]{$\wildcard$};
\draw (8,-2) node[anchor=base]{$\wildcard$};
\draw (9,-2) node[anchor=base]{$\wildcard$};
\draw (10,-2) node[anchor=base]{$\wildcard$};
\draw (11,-2) node[anchor=base]{d};
\draw (3,-3) node[anchor=base]{b};
\draw (4,-3) node[anchor=base]{$\wildcard$};
\draw (5,-3) node[anchor=base]{$\wildcard$};
\draw (6,-3) node[anchor=base]{$\wildcard$};
\draw (7,-3) node[anchor=base]{$\wildcard$};
\draw (8,-3) node[anchor=base]{c};
\draw (9,-3) node[anchor=base]{c};
\draw (10,-3) node[anchor=base]{$\wildcard$};
\draw (11,-3) node[anchor=base]{$\wildcard$};
\draw (12,-3) node[anchor=base]{$\wildcard$};
\draw (13,-3) node[anchor=base]{$\wildcard$};
\draw (14,-3) node[anchor=base]{$\wildcard$};
\draw (15,-3) node[anchor=base]{d};
\draw (6,-4) node[anchor=base]{b};
\draw (7,-4) node[anchor=base]{$\wildcard$};
\draw (8,-4) node[anchor=base]{c};
\draw (9,-4) node[anchor=base]{c};
\draw (10,-4) node[anchor=base]{$\wildcard$};
\draw (11,-4) node[anchor=base]{$\wildcard$};
\draw (12,-4) node[anchor=base]{$\wildcard$};
\draw (13,-4) node[anchor=base]{$\wildcard$};
\draw (14,-4) node[anchor=base]{$\wildcard$};
\draw (15,-4) node[anchor=base]{d};
\draw (6,-5) node[anchor=base]{b};
\draw (7,-5) node[anchor=base]{$\wildcard$};
\draw (8,-5) node[anchor=base]{$\wildcard$};
\draw (9,-5) node[anchor=base]{c};
\draw (10,-5) node[anchor=base]{c};
\draw (11,-5) node[anchor=base]{$\wildcard$};
\draw (12,-5) node[anchor=base]{$\wildcard$};
\draw (13,-5) node[anchor=base]{$\wildcard$};
\draw (14,-5) node[anchor=base]{$\wildcard$};
\draw (15,-5) node[anchor=base]{d};
\draw (18,-6) node[anchor=base]{b};
\draw (19,-6) node[anchor=base]{$\wildcard$};
\draw (20,-6) node[anchor=base]{$\wildcard$};
\draw (21,-6) node[anchor=base]{c};
\draw (22,-6) node[anchor=base]{c};
\draw (23,-6) node[anchor=base]{$\wildcard$};
\draw (24,-6) node[anchor=base]{$\wildcard$};
\draw (25,-6) node[anchor=base]{$\wildcard$};
\draw (26,-6) node[anchor=base]{d};
\draw (0.5,-6.5) -- (28.5,-6.5);
\end{tikzpicture}
\caption{The five occurrences of the query pattern $p= \texttt{b} \gap{0}{4} \texttt{cc} \gap{3}{5} \texttt{d}$ in the string $t=\texttt{acbccbacccddabdaabcdccbccdaa}$.\label{fig:vlgexample}}
\end{figure}

\noindent As shown by \autoref{ex:vlg}, different occurrences of the query pattern $p$ can start or end at the same position in $t$, and the same substring in $t$ can contain multiple occurrences of $p$. Hence to completely characterize an occurrence of $p$ in $t$, we need to report the positions of the individual subpatterns $p_0,p_1,\ldots,p_j$ for each full occurrence of the pattern. However, in the following we will restrict our attention to reporting the start and end position of each occurrence of $p$ in $t$. For the above example, we would thus report the pairs $(3,11)$, $(3,15)$, $(6,15)$ and $(18,26)$.\\

\subsection{Supporting Variable Length Gaps}\label{vlg-howto}
Recall that a variable length gap $\gap{a_i}{b_i}$ is equivalent to $a_i$ wildcards followed by $b_i - a_i$ optional wildcards. Hence to support variable length gaps, we only have to describe how the search algorithms must be modified to match an optional wildcard in $p$. We simulate an optional wildcard as matching both a normal wildcard and the empty string. When matching a normal wildcard the search can only branch in explicit vertices, but for optional wildcards the search will always branch to at least two locations. This is the reason for the $2^{B-A}$ factor in the query times of \autoref{vlg-optimalspaceindex}--\ref{vlg-optimaltimeindex}.

To report the substrings in $t$ where the query pattern occurs, we assume that each leaf $\ell$ in $T(\suff(t))$ has been labeled by the start position, $\ind(\ell)$, of the suffix in $t$ it corresponds to. The search for $p$ terminates in a set of locations $\mathcal R$, each corresponding to one or more substrings in $t$ where the query pattern $p$ occurs. We can report the start and end position of these substrings by traversing the subtrees rooted in the locations of $\mathcal R$. For a subtree rooted in $\ell^\prime \in \mathcal R$ we identify the leaves $\ell_0,\ell_1,\ldots,\ell_r$ corresponding to suffixes of $t$ having $\ell^\prime$ as a prefix. The start and end positions of these substrings are then given by
\[
\left (\ind(\ell_0), \ind(\ell_0)+|\ell^\prime| \right ), \left (\ind(\ell_1), \ind(\ell_1)+|\ell^\prime| \right ), \ldots, \left ( \ind(\ell_r), \ind(\ell_r)+|\ell^\prime| \right ) \; .
\]

\subsection{Analysis of the Modified Search}
To analyse the query time we bound the maximum number of LCP queries performed during the search for the query pattern
\[
p = p_0\; \gap{a_1}{b_1}\; p_1\; \gap{a_2}{b_2}\; \ldots\; \gap{a_j}{b_j}\; p_j \; .
\]
We define $A_i = \sum_{l=1}^i a_l$ and $B_i = \sum_{l=1}^i b_l$. The number of normal and optional wildcards preceding the subpattern $p_i$ in $p$ is $A_i$ and $B_i - A_i$, respectively. To bound the number of locations in which an LCP query for the subpattern $p_i$ can start, we choose and promote $l=0,1,\ldots,B_i-A_i$ of the preceding optional wildcards to normal wildcards and discard the rest. For a specific choice there are exactly $A_i+l$ wildcards preceding $p_i$, and thus the number of locations in which an LCP query for $p_i$ can start is at most $\beta^{A_i+l}$. The term $\beta$ is an upper bound on the branching factor of the search when consuming a wildcard. For a suffix tree $T(\suff(t))$ the branching factor is $\beta=\sigma$, but indexes based on wildcard trees can have a smaller branching factor.
There are $\binom{B_i - A_i}{l}$ possibilities for choosing the $l$ optional wildcards, so the number of locations in which an LCP query for $p_i$ can start is at most
\begin{equation*}
\sum_{l=0}^{B_i-A_i} \binom{B_i-A_i}{l} \beta^{A_i+l} ~\leq~ 2^{B_i-A_i} \beta^{B_i} .
\end{equation*}
Summing over the $j+1$ subpatterns, we obtain a bound of
\(
O \left( 2^{B-A} \beta^B \right)
\)
on the number of LCP queries performed during a search for the query pattern $p$.
%
%
Since LCP queries are performed in time $O(\log \log n)$ and we have to preprocess the pattern in time $O(m)$, the total query time becomes $O(m + 2^{B-A} \beta^B \log \log n + occ)$. This concludes the proof of \autoref{vlg-optimalspaceindex} and \autoref{vlg-wst-index}.

To show \autoref{vlg-optimaltimeindex}, we apply a black-box reduction very similar to \autoref{cuttop}, leading to a $(k,o)$-bounded optional wildcard index, where $k$ and $o$ are the maximum number of normal and optional wildcards allowed in the pattern, respectively. This index consists of the following two optional wildcard indexes. A query is performed on one of these indexes depending on the length $m+B$ of the query pattern $p$.
\begin{enumerate}
\item The unbounded optional wildcard index given by \autoref{vlg-optimalspaceindex}. This index has query time $O(m+2^{B-A}\sigma^B\log\log n + occ)$ and uses space $O(n)$.
\item The $(k,o)$-bounded optional wildcard index obtained by using the wildcard tree $T_1^{k+o} (\pref_G(\suff(t)))$ without the LCP data structure, where $G=\sigma^{k+o}\log\log n$. For $\beta=1$ the search for the subpattern $p_i$ can start from at most $2^{B_i-A_i}$ locations. Searching for $p_i$ from each of these locations takes time $O(|p_i|+b_i)$, since the LCP data structure is not used and the tree must be traversed one character at a time. Summing over the $j+1$ subpatterns, we obtain the following query time for the index
\[
O \left( \sum_{i=0}^j 2^{B_i - A_i} (|p_i| + b_i) + occ \right) ~=~ O \left( 2^{B-A} (m + B) + occ \right) \; .
\]
The index is a wildcard tree and by the same argument as for \autoref{optimaltimeindex}, it can be stored using space $O(n G^{k+o})$.
\end{enumerate}
In case the query pattern $p$ has length $m+B > G$ we query the first index. It follows that $2^{B-A} \sigma^B \log\log n \leq 2^{B-A} G < 2^{B-A}(m+B)$, so the query time is $O(2^{B-A}(m+B)+occ)$. If $p$ has length $m+B \leq G$ all occurrences of $p$ in $t$ can be found by querying the second index in time $O(2^{B-A}(m+B)+occ)$. The space of the index is 
\[
O(n + nG^{k+o}) = O(n (\sigma^{k+o} \log\log n)^{k+o}) = O(n \sigma^{(k+o)^2} \log^{k+o}\log n) \; . 
\]
This concludes the proof of \autoref{vlg-optimaltimeindex}.

\section{Conclusion}
We have presented several new indexes supporting patterns containing wildcards and variable length gaps. All previous wildcard indexes have query times which are either exponential in the number of wildcards or gaps in the pattern, or linear in the length of the indexed text. We showed that it is possible to obtain an index with linear query time while avoiding space usage exponential in the length of the indexed string. Moreover, we gave an index with linear space usage and a fast query time. For wildcard indexes having a query time sublinear in the length of the indexed string, an interesting open problem is whether there is an index where neither the size nor the query time is exponential in the number of wildcards or gaps in the pattern.

\bibliographystyle{abbrv}
\bibliography{references}

\end{document}